


\documentclass[journal,comsoc]{IEEEtran}

\usepackage{color}
\usepackage[table]{xcolor}

\usepackage{amsmath}
\usepackage{multirow}
\usepackage{amssymb}


\usepackage{algorithm}
\usepackage[noend]{algpseudocode}

 \usepackage{amsthm}

\newtheorem{proposition}{Proposition}

\usepackage[font=small,labelfont=bf]{caption}

\usepackage{subcaption}
 \usepackage{graphicx}

\usepackage{cite}
%
\ifCLASSINFOpdf
\else
\fi

\begin{document}
%
\title{UxNB-Enabled Cell-Free Massive MIMO with HAPS-Assisted Sub-THz Backhauling}
\author{Omid~Abbasi, \IEEEmembership{Senior Member,~IEEE}, Halim~Yanikomeroglu, \IEEEmembership{Fellow,~IEEE} 
\thanks{O. Abbasi and H. Yanikomeroglu are with
the Non-Terrestrial Networks (NTN) Lab, Department of Systems and Computer Engineering, Carleton University, Ottawa, ON K1S5B6, Canada. e-mail: omidabbasi@sce.carleton.ca; halim@sce.carleton.ca (\textit{Corresponding author: Omid Abbasi})}
\thanks{A preliminary version of this work appeared in the Proceedings of the 2022 IEEE International Conference on Communications (ICC) \cite{conf_version}.}}

\maketitle

\begin{abstract}
In this paper, we propose a cell-free scheme for unmanned aerial vehicle (UAV) base stations (BSs) to manage the severe intercell interference between terrestrial users and UAV-BSs of neighboring cells. Since the cell-free scheme requires enormous bandwidth for backhauling, we propose to use the sub-terahertz (sub-THz) band for the backhaul links between UAV-BSs and central processing unit (CPU). Also, because the sub-THz band requires a reliable line-of-sight link, we propose to use a high altitude platform station (HAPS) as a CPU. At the first time-slot of the proposed scheme, users send their messages to UAVs at the sub-6 GHz band. The UAVs then apply match-filtering and power allocation. At the second time-slot, at each UAV, orthogonal resource blocks are allocated for each user at the sub-THz band, and the signals are sent to the HAPS after analog beamforming. In the HAPS receiver, after analog beamforming, the message of each user is decoded. We formulate an optimization problem that maximizes the minimum signal-to-interference-plus-noise ratio of users by finding the optimum allocated power as well as the optimum locations of UAVs. Simulation results demonstrate the superiority of the proposed scheme compared with aerial cellular and terrestrial cell-free baseline schemes. 

\end{abstract}

\begin{IEEEkeywords}
UxNB, cell-free, HAPS, THz, backhaul.
\end{IEEEkeywords}

%
\IEEEpeerreviewmaketitle

\section{Introduction}
\subsection{Beckground}
The application of unmanned aerial vehicles (UAVs) as base stations (BSs), which are called UxNBs by the third generation
partnership project (3GPP)  \cite{3GPP_uxnb}, has attracted substantial attention for 5G and beyond-5G due to their many advantages, such as high mobility, on-demand deployment, and a high probability of establishing a line-of-sight (LoS) link with users. UxNBs can provide connectivity for users at special events, such as those held in stadiums or theatres, or in areas impacted by natural disasters like floods or earthquakes. These scenarios can be in served, under-served or un-served areas. 
Deploying UxNBs can also be very useful in cases where terrestrial infrastructure may be unable to serve all users due to a temporary spike in demand. At such cases, some users can be offloaded to the aerial infrastructure. 
While LoS connectivity between each UxNB and users can provide very high data rates for the users within cell, severe intercell interference in aerial cellular networks \cite{comp_in_sky}, which is caused by UxNBs of neighboring cells, is a big problem.
 Another challenge of utilizing UxNBs is their backhauling, and since this can not be performed through the fiber link; it must be wireless \cite{Elham_conf}. The backhauling challenge of UxNBs is even greater in remote areas, where terrestrial infrastructure or fiber links may be lacking.  In this paper, we solve these two challenges of UxNBs, i.e., intercell interference and backhauling, with the aid of the cell-free \cite{Elina} scheme and a High altitude platform station (HAPS) \cite{survey_haps}, respectively.

The cell-free massive multiple-input multiple-output (MIMO) technology has emerged as a promising solution for the advancement of wireless communication beyond 5G \cite{Elina}. In traditional cellular networks with co-located massive MIMO, each user is connected to a single BS equipped with a massive number of antennas. Nonetheless, in the cell-free massive MIMO, each user is connected to a large number of access points (APs). In this case, all APs are connected to a central processing unit (CPU) \cite{ngo}, where the received signals from all APs are combined. Indeed, unlike the conventional cellular scheme where received signals from neighboring cells are typically treated as interference, the cell-free scheme leverages these received signals as valuable information at the CPU. This allows for the detection of transmitted signals from users by leveraging the received signals from multiple APs and combining them at the CPU.

\subsection{State of the Art}
The application of HAPSs in wireless networks has attracted a lot of attention recently \cite{survey_haps,3GPP_haps,Softbank}. HAPSs are typically deployed in the stratosphere at an altitude of around $20 ~\mathrm{km}$ with a quasi-stationary position relative to the earth \cite{grace2011broadband}. A HAPS can provide LoS communication and a wide
coverage radius of $50-500 ~\mathrm{km}$, and it can be equipped with powerful computing resources and batteries \cite{survey_haps}. In \cite{Sahabul_haps}, the authors envisioned a HAPS as a super macro base station to provide connectivity in a plethora of applications. Unlike a conventional HAPS, which targets broad coverage for remote areas or disaster recovery, they envisioned HAPS for highly populated metropolitan areas. In \cite{ren2021caching}, HAPS computing was considered as a promising extension of the edge computing.  
In \cite{Safwan}, the authors analyzed the link budget of the aerial platforms equipped with reconfigurable smart surfaces, and compared their communication performance with that of the terrestrial networks.

In the cell-free scheme, enormous bandwidth is required for the backhaul links between APs and CPU. For terrestrial cell-free APs, this huge bandwidth for backhauling can be provided by fiber links \cite{ngo,Manijeh}. However, for our proposed aerial cell-free APs, this backhauling must be wireless. In order to satisfy the enormous bandwidth requirements of backhauling for UxNBs, we need to utilize the upper frequency bands for these wireless links \cite{Elham_conf}. In \cite{Elham_caching}, 
in order to address the limited wireless backhaul
capacity of UxNBs and consequently, decrease the latency,
content caching is proposed to alleviate the backhaul
congestion. The authors in \cite{backhaul_number} provided analytical expressions for the probability of
successfully establishing a backhaul link in the millimeter-wave band between UAV-BSs and ground stations, and they showed that increasing the density of
the ground station network improved the
performance of the backhauling. In \cite{comp_in_sky}, the authors proposed utilizing the coordinate multipoint scheme for UAV-BSs in uplink communications. They assumed that the backhaul links between all UAVs and the CPU were perfect so that the
signal distortion induced by the backhaul transmission was ignored.
The problem of wireless backhauling of UxNBs is largely due to the dynamic blockages and shadowing between UxNBs and a terrestrial CPU, which makes it difficult to utilize the upper frequency bands (such as the terahertz (THz) band) for these links \cite{mmwave_UAV_backhaul}. Higher frequency bands require a reliable LoS link, and probabilistic LoS links between UAVs and a terrestrial CPU is not suitable for these bands. We propose to utilize a HAPS in the stratosphere to solve this problem. 

The THz band is generally defined as the region of the electromagnetic
spectrum in the range of $100 ~\mathrm{GHz}$ to $10 ~\mathrm{THz}$, and the sub-THz band is defined as the frequencies in the range of $100 ~\mathrm{GHz}$ to $300 ~\mathrm{GHz}$ \cite{THz_loss_Mag,akyildiz2014terahertz}.
The D band ($110-170 ~\mathrm{GHz}$) is among the next interesting range of
frequencies for beyond-5G systems \cite{D-band-juntti,Rappa}, and hence we consider this band as the carrier frequency in our paper. The authors in \cite{Petrov_SINR_THz}
 developed an analytical model for interference
and signal-to-interference-plus-noise ratio (SINR) assessment in dense THz networks obtaining the
first two moments and density functions for both metrics. 
In \cite{dahrouj}, the authors investigated
a THz Ultra-Massive-MIMO-based aeronautical communication scheme for the space-air-ground integrated network. In \cite{UAV_THz}, the problem of UAV deployment, power allocation, and bandwidth allocation was investigated for a UAV-assisted wireless system operating at THz frequencies. 

\subsection{Motivations and Contributions}
Our work proposes the application of the cell-free scheme to a set of aerial access points (i.e., UxNBs) to address the challenges posed by severe intercell interference in aerial cellular networks, particularly between UxNBs of neighboring cells and terrestrial users. In contrast to terrestrial networks, where the transmitted signals face substantial non(N)-LoS path loss to reach distant APs, aerial networks exhibit a distinct behavior. With the presence of LoS links between terrestrial users and aerial APs, the transmitted signals in aerial networks can reach even far APs. The cell-free scheme effectively transforms these strong destructive interference signals into valuable signals that can be utilized by the CPU for the detection of users' messages.  While numerous studies have investigated the cell-free scheme in terrestrial networks \cite{Elina,ngo,Manijeh,Bashar_backhaul}, to the best of our knowledge, our research is the first to consider the application of the cell-free scheme specifically for UAV-BSs.

In our scheme, instead of a terrestrial CPU, we propose to utilize a HAPS as an aerial CPU to process all the received signals from all UxNBs. HAPS is an ideal choice to work as a CPU for our proposed cell-free scheme since there is negligible blockage and shadowing for backhaul links between a HAPS and UxNBs, which means the LoS links will be reliable. Hence, we can easily use the upper frequency bands for these links to support the enormous bandwidth requirement for backhauling of the proposed cell-free scheme.  In this paper, we propose to use the sub-THz frequency band for the backhaul links between UxNBs and HAPS.
The D band ($110-170 ~\mathrm{GHz}$) is among the next interesting range of
frequencies for beyond-5G systems \cite{D-band-juntti,Rappa}, and hence we consider this band as the carrier frequency in our paper. \footnote{In addition to backhauling of the UxNBs in urban and dense urban environments, another important scenario for using a HAPS as a CPU for backhauling of the aerial APs is for the cases where these APs are deployed to serve users in remote areas or where terrestrial infrastructure and fiber links may be lacking or damaged.}

In summary, this paper introduces an uplink cell-free communication scheme developed to address the issue of intercell interference between terrestrial users and UAV BSs located in adjacent cells. To cope with the substantial bandwidth demands associated with the wireless backhaul of UAV BSs in our proposed cell-free scheme, we propose harnessing the sub-THz band to establish high-capacity links between the UAV-BSs and the CPU. Additionally, to ensure reliable LoS backhaul links within the sub-THz band, we introduce a HAPS as the aerial CPU responsible for aggregating the transmitted signals from UAVs.
Our proposed scheme is performed in two time slots. During the initial slot, users transmit their messages to the UAVs using the sub-6 GHz band. The UAVs utilize match-filtering and power allocation techniques to process the received signals. In the subsequent slot, each UAV applies analog beamforming and transmits its signal in the sub-THz band to the HAPS. At the HAPS receiver, analog beamforming is employed for user message decoding.
For the purpose of optimizing system performance, we formulate an optimization problem focused on maximizing the minimum SINR for the users. This problem entails determining the optimal power allocation and UAV locations.
The cell-free scheme presented in this paper, which leverages the sub-THz band and HAPS technology, offers an effective solution to mitigate interference and enhance communication performance in UAV systems.

 The main contributions of this paper are summarized as follows:
\begin{itemize}
 \item A cell-free scheme for a set of aerial APs (UxNBs) is proposed to manage the severe intercell interference in aerial cellular networks between UxNBs of neighboring cells and terrestrial users. To the best of our knowledge, our work is the first to consider the cell-free scheme for the UAV-BSs. 
  \item We utilize a HAPS as a CPU for backhauling of UxNBs in the sub-THz band. In this paper, instead of a terrestrial CPU, we show how a HAPS can be used as an aerial CPU to process all received signals from all UxNBs. HAPS is an ideal choice to work as a CPU since there is negligible blockage and shadowing for backhaul links between it and the UxNBs which means a reliable LoS link. Hence, we can easily use the upper frequency bands for these links to support the huge bandwidth requirement for backhauling.
  \item A transceiver scheme at the UxNBs is proposed. At the first time slot of the proposed cell-free scheme, users send their messages to UxNBs at the sub-6 GHz frequency band. Then each UxNB applies match-filtering to align the received signals from users, followed by power allocation among the aligned signals of all users. At the second time slot, at each UxNB, we allocate orthogonal resource blocks (RBs) for each user at the sub-THz band, and forward the filtered signals of all users to the HAPS after analog beamforming.
  
  \item A receiver scheme at the HAPS is proposed. At the HAPS, in order to align the received signals for each user from different UxNBs, we perform analog beamforming. Then, we demodulate and decode the message of each user at its own unique RB. 

  \item We derive a closed-form expression for the achievable rate of the users utilizing the use-and-then-forget bound \cite{marzetta2016fundamentals} based on the proposed transceiver and receiver schemes.
  \item We formulate an optimization problem that maximizes the minimum SINR of users. We find optimum values for two blocks of optimization variables (i.e., the allocated powers for users in each UxNB and the locations of UxNBs), which are solved by the bisection \cite{boyd2020disciplined} and successive convex approximation (SCA) \cite{boyd2004convex} methods, respectively. Finally, the whole optimization problem is solved by the block coordinate descent (BCD) method \cite{razaviyayn2013unified}.

 \end{itemize}
 
Simulation results demonstrate the superiority of the proposed cell-free scheme compared with the aerial cellular and terrestrial cell-free baseline schemes in urban, suburban, and dense urban environments. Also, simulation results show that utilizing a HAPS as a CPU is useful when the considerable path loss in the sub-THz band between UxNBs and HAPS is compensated for by a high number of antenna elements at the HAPS.

\subsection{Outline}
The remainder of this paper is organized as follows. Section II presents the system model. Section III presents the proposed transceiver scheme and the corresponding achievable rate. Section IV provides the formulated optimization problem and its solution for the powers and locations of UxNBs. Section V provides simulation
results to validate the performance of the proposed scheme. Finally, Section VI concludes the paper.

\section{System Model and Channel Model}
In this paper, our aim is to provide terrestrial users with connectivity, utilizing aerial BSs operating in cell-free mode. Communication is conducted in the uplink mode, and we employ a HAPS as a CPU for backhauling the aerial BSs in the sub-THz band. In this section, we provide a detailed explanation of the system model and the channel model of the proposed system. In the subsequent sections, we first present transceiver schemes for both the aerial BSs and HAPS, followed by deriving a closed-form expression for the achievable user rates. We then formulate an optimization problem that maximizes the minimum SINR of users by determining optimal values for two sets of optimization variables, namely the allocated powers for users in each aerial BS and the locations of UAVs. \par
\subsection{System Model}
\begin{figure}[!t]
  \centering
  \includegraphics[width=0.5\textwidth]{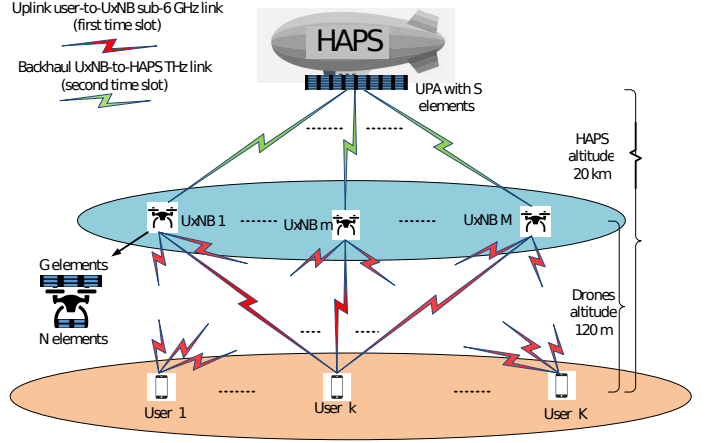}
  \caption{System model for the proposed uplink cell-free scheme with HAPS-assisted sub-THz backhauling. At the first time slot, users send their messages to UxNBs, and each UxNB applies match filtering for its received signals. At the second time slot, the UxNBs forward each user's filtered signal to the HAPS at orthogonal RBs. Then, HAPS decodes the message of each user as a CPU. }\label{system-model}
  \end{figure}
 The proposed aerial cell-free scheme with HAPS-assisted sub-THz backhauling is shown in Fig. \ref{system-model}.
As we can see, $M$ UxNBs are serving $K$ users in the cell-free mode. Each UxNB is assumed to be equipped with a uniform planar array (UPA) with $N$ receive antenna elements positioned on the underside of each UAV working in the sub-6 GHz frequency band, and a UPA with $G$ transmit antenna elements on the topside of the UAV working in the sub-THz band. We propose to utilize a HAPS as a CPU to combine the received signals of all UxNBs and decode the messages of all $K$ users. The HAPS is equipped with a UPA with $S$ receive antenna elements working in the sub-THz band.
Due to the requirement of $K$ orthogonal RBs for retransmission of the received signals at the UxNBs to the HAPS, we propose to use the sub-THz frequency band for the backhaul links.\par
Note that THz communication heavily relies on LoS propagation due to the high-frequency characteristics. However, when the LoS path is obstructed, the achievable rate of the THz channel is significantly limited, posing challenges in serving users without a clear LoS path. In our scenario, we specifically employ the THz band for the UxNB to HAPS links. Given the high flight altitudes of both UAVs and HAPS, the probability of encountering blocking objects between them is extremely low. This enables us to leverage the THz band for efficient and reliable communication between UxNBs and HAPS.\par
In the proposed scheme, the transmission is performed in two time slots. At the first time slot, users send their data to the UxNBs. 
It should be noted that the severe intercell interference among UAVs of neighboring cells and users is a big problem in aerial networks. In order to solve this problem, we propose an aerial cell-free scheme where each user is served by multiple UxNBs that establish a strong link with users. 
At each UxNB, the channel state information (CSI) of the  user-to-UxNB links are estimated, and then they are utilized for match-filtering of the received signals. We also divide the total power of each UxNB among the users. 
Since the instantaneous CSI only needs to be known locally at the UxNBs for the match-filtering scheme, this is a big advantage compared to other schemes, such as zero-forcing, which requires the instantaneous CSI of all links at the CPU \cite{emil}.\par
At the second time slot, the UxNBs forward the power-allocated and match-filtered signals to the HAPS. Here, we utilize the sub-THz band for the UxNB to HAPS link, and we allocate orthogonal resource RBs for each user's signal at these backhaul links. This means that in the cell-free scheme, we need $K$ times more bandwidth for backhauling compared to the access network, which can be satisfied in the sub-THz band. 
 In order to align the received signals for each user from the different UxNBs, we perform analog beamforming based on the steering vectors of the UPA at the HAPS for all UxNBs. Finally, we demodulate and decode the message of each user at its own unique RBs. 
  A summary of the notations used in this article is summarized in Table \ref{not_def}. 

  The beam squint effect, which involves the physical deviation of the beam at various sub-carrier frequencies, results in a notable reduction in array gain within the beamforming architecture. The utilization of ultra-wide bandwidth and highly focused pencil beams in THz systems exacerbates the beam squint effect \cite{THz_6g_Precoding}. Consequently, beams operating at different frequencies may diverge and direct themselves towards distinct directions that are considerably distant from the intended target user. In our proposed scheme, we do not employ a wide bandwidth in the sub-THz band.  Notably, both the first (sub-6 GHz) and second (sub-THz) hops utilize the same bandwidth of 1 MHz. Consequently, our scheme does not experience the beam squint effect, as the limited bandwidth minimizes significant variations in the physical direction of the beam across different sub-carrier frequencies.

\begin{table*}[t]
\small
	\caption{The definition of notations.}
	\vspace{-11pt}
	\begin{center}
	{\rowcolors{2}{white}{blue!10}
		\begin{tabular}{|c|c|} \hline \label{not_def}
		\textbf{Notation} & 	\textbf{Definition} \\ 
			\hline \hline
		$M$ and $K$  &  Number of UxNBs and users \\ 
			$N$ and $G$ & Total number of receive and transmit antenna elements at each UxNB\\ 
	$S$ & Total number of receive antenna elements at HAPS\\ 
			$h_{kmn}$ & The channel between user $k$ and antenna element $n$ of UxNB $m$\\ 
			$\mathbf{a}_{km}=[a_{kmn}]_{1\times N}$ & The steering vector of the receive antenna array of UxNB $m$ for user $k$\\ 
   $\bold{b}_{m}=[b_{mg}]_{1\times G}$ & The steering vector of the transmit antenna array of UxNB $m$\\
    $\bold{c}_{m}=[c_{ms}]_{1\times S}$ & The steering vector of the receive antenna array of the HAPS transmitted from UxNB $m$\\
			$d_{km}$ & Distance between user $k$ and UxNB $m$ \\ 
				$\lambda$ and $f$ &  Wavelength and carrier frequency  \\ 
			$(x_{u,k}, y_{u,k},0)$ & The coordinates of each user $k$  \\ 
   $(x_{d,m}, y_{d,m}, z_{d,m})$ & The coordinates of each UxNB $m$\\
			$\theta_{km}$ and $\phi_{km}$  & The elevation and azimuth angles of arrival of the transmitted signal from user $k$ at the UxNB $m$\\
   $\Theta_{m}$ and $\Phi_{m}$  & The elevation and azimuth angles of the transmitted signal from UxNB $m$ at the HAPS\\
			$\mathsf{PL}_{km}^{\mathsf{LoS}}$ and $\mathsf{PL}_{km}^{\mathsf{NLoS}}$ & LoS and NLoS path loss of the link between UxNB $m$ and user $k$ \\
			$\mathsf{FSPL}_{km}$ & The free-space path loss between user $k$ and UxNB $m$ \\ 
   $\eta_{\mathsf{LoS}}^{\mathsf{dB}}$ and $\eta_{\mathsf{NLoS}}^{\mathsf{dB}}$ & The excessive path losses  affecting the air-to-ground links for LoS and NLoS cases\\
   $A$ and $B$ & Fixed parameters that determine the probability of existing LoS link \\
   $\beta_{km}^2$ & The large-scale channel power gain between user $k$ and UxNB $m$\\
   $g_{mgs}$ & The channel between the element $g$ of UxNB $m$ and the element $s$ of the HAPS\\ 
   $d_{m}$ & The distance between the reference element of UxNB $m$ and the reference element of HAPS\\
$\gamma_m^2$ and $\rho_m^2$& The total and free space path loss between UxNB $m$ and the HAPS\\
  $\tau_{m}$ and $K_a$ ($\mathrm{dB/km}$) & The transmittance and absorption coefficient of the medium  \\
 $h_m^{e}$ and $h_{e}$ & The effective height of a medium for UxNB $m$ and for a UxNB in the nadir of the HAPS\\
 $\omega$ & A random variable with uniform distribution as $U(0,2\pi)$\\
 $P_k$ and  $s_k$ & The maximum transmit power and the transmitted symbol at user $k$\\
  $P_m$ & The total power at UxNB $m$\\
   $Z_{m}$ and $Z_H$ & The AWGN noise at the receiver of UxNB $m$ and HAPS\\
        $\eta$, $\bold{T}$,  $\bold{t}$,  and $\zeta$ & Slack variables to solve optimization problems\\
  $\epsilon$ &  The tolerance value in Algorithm 1\\
			\hline
		\end{tabular}}
	\end{center}
	\vspace{-19pt}
 \normalsize
\end{table*}

\subsection{Channel Model}
In our scheme, the channel between user $k$ and antenna element $n$ of UxNB $m$ is indicated by $h_{kmn}$, which includes both large-scale fading (i.e., path loss and shadowing) and multipath small-scale fading effects. We assume a UPA for the receiver of each UAV with $N=N_w\times N_l$ antenna elements, where $N_w$ and $N_l$ show the number of antenna elements in the width and length of the array, respectively. Because of the possibility of the existence of a LoS link between the users and UAVs, a Ricean distribution is considered for the channel between user $k$ and antenna element $n=(n_w,n_l)$ of UxNB $m$ as follows: 
\small
\begin{equation}\label{prob_hkmn}
    h_{kmn}=10^{-\dfrac{\mathsf{PL_{km}}}{20}}(\sqrt{P_{km}^{\mathsf{LoS}}}a_{kmn}+\sqrt{P_{km}^{\mathsf{NLoS}}}CN(0,1)),
\end{equation}
\normalsize
where $CN(0,1)$ shows a complex normal random variable with a mean value of 0 and a variance (power) of 1. Also, 
\small
\begin{equation}
\begin{split}
      a_{kmn}=\exp(j2\pi(\frac{d_{km}}{\lambda_\mathsf{sub6}}))&\times\exp(j2\pi(\frac{d_\mathsf{sub6,w}(n_w-1)\sin\theta_{km}\cos\phi_{km}}{\lambda_\mathsf{sub6}}))\\&\times\exp(j2\pi(\frac{d_\mathsf{sub6,l}(n_l-1)\sin\theta_{km}\sin\phi_{km}}{\lambda_\mathsf{sub6}}))
\end{split}
\end{equation} 
\normalsize
indicates the phase shift of the LoS link's signal due to distance in which $d_{km}$ shows the distance between user $k$ and UxNB $m$; $d_\mathsf{sub6,w}=\frac{\lambda_\mathsf{sub6}}{2}$ ($d_\mathsf{sub6,l}=\frac{\lambda_\mathsf{sub6}}{2}$) is the element spacing along the width (length) of antenna array for each UxNB in sub-6 GHz frequency band $f_\mathsf{sub6}$; $\lambda_\mathsf{sub6}=\frac{C}{f_\mathsf{sub6}}$ is the wavelength, and $C=3\times10^8 ~\mathrm{m/s}$ is the speed of light. 
The coordinates of each user $k$ and each UAV $m$ are denoted by $(x_{u,k}, y_{u,k},0)$ and $(x_{d,m}, y_{d,m}, z_{d,m})$, respectively. Hence the distance between user $k$ and UAV $m$ equals $d_{km}=\sqrt{(x_{u,k}-x_{d,m})^2+(y_{u,k}-y_{d,m})^2+z_{d,m}^2}$.
$\theta_{km}$ and $\phi_{km}$  show the elevation and azimuth angles of arrival of the transmitted signal from user $k$ at the UxNB $m$, respectively.
It is worth mentioning that $\bold{a}_{km}=[a_{kmn}]_{1\times N}$ creates the steering vector of the receive antenna array of UAV $m$ for user $k$.  
Also, we have $\mathsf{PL_{km}}=P_{km}^{\mathsf{LoS}}\mathsf{PL_{km}^{\mathsf{LoS}}}+P_{km}^{\mathsf{NLoS}}\mathsf{PL_{km}^{\mathsf{NLoS}}}$ in which the LoS and non-LoS (NLoS) path loss of the link between UxNB $m$ and user $k$ are equal to $\mathsf{PL}_{km}^{\mathsf{LoS}}=\mathsf{FSPL}_{km}+\eta_{\mathsf{LoS}}^{\mathsf{dB}}$, and $\mathsf{PL}_{km}^{\mathsf{NLoS}}=\mathsf{FSPL}_{km}+\eta_{\mathsf{NLoS}}^{\mathsf{dB}}$, respectively \cite{Hourani}. In these equations, $\mathsf{FSPL}_{km}=10\log(\frac{4\pi f_\mathsf{sub6}d_{km}}{C})^2$ shows the free-space path loss (FSPL), and $\eta_{\mathsf{LoS}}^{\mathsf{dB}}$ and $\eta_{\mathsf{NLoS}}^{\mathsf{dB}}$ indicate the excessive path losses (in dB)  affecting the air-to-ground links for LoS and NLoS cases, respectively \cite{Irem}. $P_{km}^{\mathsf{LoS}}=\frac{1}{1+A\exp({-B(90-\theta_{km})-A})}$ shows the probability of establishing LoS link between user $k$ and UxNB $m$  in which $\theta_{km}$ (in degree) shows the elevation angle between user $k$ and UxNB $m$, and $A$ and $B$ are parameters depending on the environment \cite{Hourani}. $P_{km}^{\mathsf{NLoS}}=1- P_{km}^{\mathsf{LoS}}$ shows the probability of establishing a NLoS link between user $k$ and UxNB $m$.

The large-scale channel power gain for the user $k$ to UxNB $m$ link is equal to
\small
\begin{equation}\label{bkm_formula}
\begin{split}
    \beta_{km}^2=E\{|h_{kmn}|^2\}&=E\{h_{kmn}h_{kmn}^*\}=10^{-\frac{\mathsf{PL_{km}}}{10}}\\&=10^{-\frac{P_{km}^{\mathsf{LoS}}\mathsf{PL_{km}^{\mathsf{LoS}}}+P_{km}^{\mathsf{NLoS}}\mathsf{PL_{km}^{\mathsf{NLoS}}}}{10}}.
\end{split}
\end{equation}
\normalsize
By considering $\beta_{0}=(\frac{4\pi f_\mathsf{sub6}}{C})^{-2}$ as the channel gain at the reference distance $d_{km}=1~\mathrm{m}$, the large-scale channel power gain can be rewritten as
  $\beta_{km}^2=\eta_{km}\beta_{0}(d_{km})^{-2}$,
in which $\eta_{km}=10^{-\frac{P_{km}^{\mathsf{LoS}}\eta_{\mathsf{LoS}}^{\mathsf{dB}}+P_{km}^{\mathsf{NLoS}}\eta_{\mathsf{NLoS}}^{\mathsf{dB}}}{10}}$ shows the excessive path loss.
We consider independent additive white Gaussian noise (AWGN) with the distribution $CN(0,\sigma^{2})$ at all antenna elements of all UxNBs. We assume that all of the antenna elements in this paper are omnidirectional with an antenna gain of 1.

We assume a UPA for the transmitter of each UxNB with $G=G_w\times G_l$ antenna elements in which $G_w$ and $G_l$ show the number of antenna elements along the width and length of the array, respectively. We also assume a UPA at the receiver of the HAPS with a large number of $S=S_w\times S_l$ antenna elements in which $S_w$ and $S_l$ show the number of antenna elements along the width and length of the array, respectively.

\begin{figure*}[!t]
  \centering
  \includegraphics[width=0.9\textwidth]{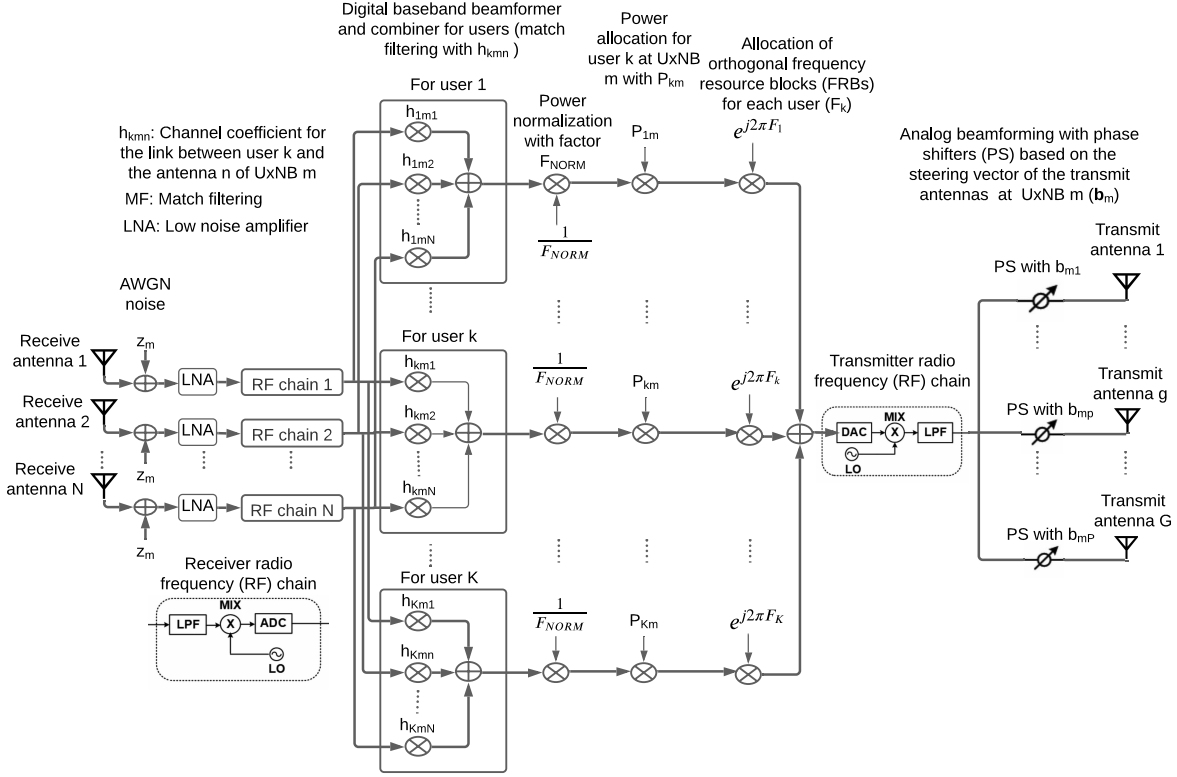}
  \caption{The proposed transceiver scheme at UxNB $m$.}\label{transc_signal}
\end{figure*}

We consider a LoS channel between HAPS and UxNBs based on the following two justifications. First, we have assumed a fixed flight height of 120 meters for all UAVs, which minimizes the presence of scatterers in close proximity to the UAVs. Given that the transmit antennas of the UxNBs are positioned on top of each UAV, the probability of the reflected beams reaching the UxNB antennas is significantly reduced, thereby minimizing the occurrence of NLoS paths. Second, the UxNB to HAPS links operate in the sub-THz band, enabling the creation of highly focused and narrow pencil beams \cite{thz-uav}. Consequently, the probability of encountering NLoS paths diminishes further.
 Therefore, the channel between the transmit antenna element $g=(g_w,g_l)$ of UxNB $m$ and the receiver antenna element $s=(s_w,s_l)$ of the HAPS, is equal to 

\small
\begin{equation}
    g_{mgs}=\gamma_{m}b_{mg}^*c_{ms},
\end{equation}
\normalsize
where
\small
 \begin{equation}\label{bmg}
 \begin{split}
       b_{mg}=\exp(j2\pi(\frac{d_{m}}{\lambda_\mathsf{THz}}))&\times \exp(j2\pi(\frac{d_\mathsf{THz,d,w}(g_w-1)\sin\Theta_{m}\cos\Phi_{m}}{\lambda_\mathsf{THz}}))\\&\times\exp(j2\pi(\frac{d_\mathsf{THz,d,l}(g_l-1)\sin\Theta_{m}\sin\Phi_{m}}{\lambda_\mathsf{THz}}))
 \end{split}
 \end{equation}
\normalsize
  indicates the phase shift of the transmitted signal from antenna element $g$ of UxNB $m$, and where
  \small
  \begin{equation}\label{cms}
  \begin{split}
      c_{ms}=\exp(j2\pi&(\frac{d_\mathsf{THz,h,w}(s_w-1)\sin\Theta_{m}\cos\Phi_{m}}{\lambda_\mathsf{THz}}))\\&\times\exp(j2\pi(\frac{d_\mathsf{THz,h,l}(s_l-1)\sin\Theta_{m}\sin\Phi_{m}}{\lambda_\mathsf{THz}})) 
        \end{split}
  \end{equation}
\normalsize
  indicates the phase shift of the received signal from user $m$ at antenna element $s$ of the HAPS. In these equations, $d_{m}$ indicates the distance between the reference antenna element of UxNB $m$ and the reference antenna element of HAPS; $d_\mathsf{THz,d,w}=\frac{\lambda_\mathsf{THz}}{2}$ ($d_\mathsf{THz,d,l}=\frac{\lambda_\mathsf{THz}}{2}$) is the element spacing along the width (length) of the transmit antenna array at each UxNB in sub-THz frequency band $f_\mathsf{THz}$; $d_\mathsf{THz,h,w}=\frac{\lambda_\mathsf{THz}}{2}$ ($d_\mathsf{THz,h,l}=\frac{\lambda_\mathsf{THz}}{2}$) is the element spacing along the width (length) of the receiver antenna array at the HAPS in the sub-THz frequency band $f_\mathsf{THz}$; and $\lambda_\mathsf{THz}=\frac{C}{f_\mathsf{THz}}$ is the wavelength. Also, $\Theta_{m}$ and $\Phi_{m}$  show the elevation and azimuth angles of the transmitted signal from UxNB $m$ at the HAPS.
It is worth mentioning that $\bold{b}_{m}=[b_{mg}]_{1\times G}$ creates the steering vector of the transmit antenna array of UxNB $m$, and $\bold{c}_{m}=[c_{ms}]_{1\times S}$ creates the steering vector of the receive antenna array of the HAPS transmitted from UxNB $m$. 
 $\gamma_m^2$ shows the path loss between UxNB $m$ and the HAPS. The path loss for the sub-THz band is given by $\gamma_{m}^2=
 \rho_m^2 \tau_m=\gamma_0 d_{m}^{-2} \tau_{m}$ in which $\rho_m^2$ is the free space path loss, $\gamma_0=(\frac{4\pi f_\mathsf{THz}}{C})^{-2}$ 
 is the channel gain at the reference distance $d_{m}=1~\mathrm{m}$,
 and $\tau_{m}=10^{-K_ah_m^{e}/10}$ shows the transmittance of the medium following
the Beer-Lambert law in which $K_a$ (in $\mathrm{dB/km}$) is the absorption coefficient of the medium, which is a function of frequency and altitude \cite{Petrov_SINR_THz}. Also, $h_m^{e}=\frac{h^{e}}{\sin \Theta_m}=\frac{h^{e}d_m}{h_{\mathrm{HAPS}}-z_{d,m}}$ shows the effective height of a medium for UxNB $m$ in which $h_{e}$ indicates the effective height for a UxNB that is located in the nadir of the HAPS \cite{ITU_series2019attenuation}.

It should be noted that the absorbed part of the sub-THz signal by the medium due to the molecule absorption (i.e., $1-\tau_{m}$ percent of the transmitted signal from UxNB $m$) will be re-emitted by the molecules with some delay, and hence we consider this re-emitted signal as re-emission interference in our rate derivations \cite{Saad_thz}.
We assume that this delayed re-emitted signal has a random phase as $\exp(j\omega)$ in which $\omega$ is a random variable with uniform distribution such that $U(0,2\pi)$.  
 We consider independent AWGN with the distribution $CN(0,\sigma_{H}^{2})$ at all antenna elements of the HAPS.

\section{Proposed Transceiver Scheme at UxNBs and HAPS}
We can see the proposed transceiver scheme at UxNB $m$ in Fig. \ref{transc_signal}. At the first time slot of the proposed scheme, each user transmits its message to the UxNBs. The transmitted signal by user $k$ is shown by $\sqrt{P_k}s_k$ in which $P_k$ (for $k\in\{1,2,...,K\}$) indicates the maximum transmit power at each user $k$, and $s_k$ ($E\{|s_{k}|^{2}\}=1$) is the transmitted symbol from user $k$. The received signal at the antenna element $n$ of UxNB $m$'s UPA equals $y_{mn}=\sum_{k=1}^{K}h_{kmn}\sqrt{P_k}s_k+Z_{m}$ in which $Z_{m}$ is the AWGN noise at the receiver of UxNB $m$. 
After receiving $y_{mn}$ at the antenna element $n$ of UxNB $m$, the low noise amplifier (LNA) block amplifies this signal, and then the radio frequency (RF) chains downconvert the signal to a baseband one and convert the analog signal to a digital signal.
Next, at the digital baseband beamforming block for each user and according to the estimated CSI for the channel between user $k$ and antenna element $n$ of UxNB $m$ (i.e., $h_{kmn}$), we perform match-filtering such that $y_{kmn}^{\mathsf{MF}}=y_{mn}\times \frac{h_{kmn}^*}{|h_{kmn}|}$, and we combine these match-filtered signals to arrive at $y_{km}^{\mathsf{COMB}}=\sum_{n=1}^Ny_{kmn}^{\mathsf{MF}}$ \cite{Omid_inspired}. $x^*$ indicates the conjugate of $x$. Then, we allocate the power $P_{km}$ for each user so that we must have $\sum_{k=1}^KP_{km}\leq P_m$ for each UxNB $m$, in which $P_m$ shows the total power at UxNB $m$. We now need to normalize the signal for each user before power allocation such that $y_{km}^{\mathsf{NORM}}=\frac{y_{km}^{\mathsf{COMB}}}{|y_{km}^{\mathsf{COMB}}|}$ in which $|x|$ shows the absolute value of $x$. Therefore, the signal for user $k$ at UxNB $m$ after power normalization and power allocation will be $y_{km}=\sqrt{P_{km}}y_{km}^{\mathsf{NORM}}$. In order to transmit these $K$ signals to the HAPS, we allocate orthogonal frequency RBs for each of them to avoid interference among the filtered signals of different users. 
It should be emphasized that the same frequency RBs are allocated for each user at different UxNBs, which means that we need $K$ RBs at the second time slot of the proposed scheme in total. These RBs can be easily provided at the sub-THz frequency band. The digital signal for users is converted to an analog one and upconverted to the sub-THz frequency band utilizing an RF chain \footnote{In order to transmit signals from each UxNB to the HAPS, we utilize only one RF chain because we apply fully analog beamforming. Also, since we perform digital beamforming (match-filtering) at the receiver of each UxNB with $N$ antenna elements, it needs $N$ RF chains. Hence, each UxNB requires $N+1$ RF chains in total for receiving and transmitting signals.}.
We perform the analog beamforming with phase shifters (PSs) to direct the transmitted signal from each UxNB to the HAPS. This is done by multiplying the signal by $b_{mg}$ in (\ref{bmg}) for each transmit antenna element $g$ of each UxNB $m$. Note that the transmitted signal for user $k$ from antenna element $g$ of UxNB $m$ equals $b_{mg}y_{km}$. \par

\begin{figure*}[!t]
  \centering
 \includegraphics[width=0.75\textwidth]{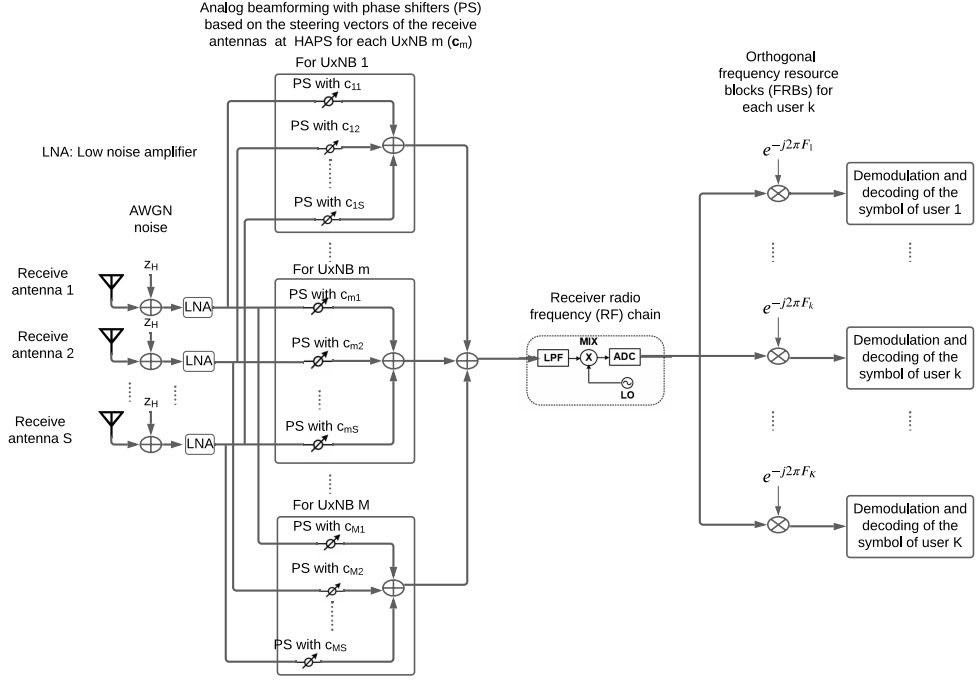}
 \caption{The proposed receiver scheme at HAPS for decoding of the message of users.}\label{receiver_signal}
\end{figure*}

The transmitted signal from each UxNB $m$ will be received at antenna element $s$ of the HAPS after passing through the channel between them (i.e., $g_{mps}$). Hence, the received signal at RB $k$ and antenna element $s$ of the HAPS is given by
\small
\begin{equation}
    \begin{split}
        y_{s}^k=\sum_{m=1}^M \sum_{g=1}^G g_{mgs}b_{mg}y_{km}+Z_H&=\sum_{m=1}^M \sum_{g=1}^G c_{ms}\gamma_{m}b_{mg}b_{mg}^*y_{km}+Z_H\\&=G\sum_{m=1}^M c_{ms}\gamma_{m}y_{km}+Z_H
    \end{split}
\end{equation}
\normalsize
 in which $Z_H$ is the AWGN noise at each receive antenna element of the HAPS. Superscript $k$ shows the received signal at RB $k$. 
We can see the proposed receiver scheme at HAPS for decoding the message of users in Fig. \ref{receiver_signal}. As one can see, first the LNA blocks amplify the received signals. 
In the proposed receiver scheme at the HAPS, we perform analog beamforming with PSs to align the received signals from each UxNB $m$ at the receive antenna elements of the HAPS. To do this, we multiply the signal $y_{s}^k$ by the conjugate of the steering vector of the receive antenna elements at HAPS for each UxNB $m$, i.e., $\bold{c}_{m}^*$  in (\ref{cms}), and then combine these signals as follows:
\small
\begin{equation}\label{yk}
    y^k=\sum_{m=1}^M\sum_{s=1}^S c_{ms}^* y_{s}^k.
\end{equation}
\normalsize
 Next, an RF chain downconverts the signal to a baseband signal, and converts the analog signal to a digital signal. In order to receive and decode signals of all UxNBs at the HAPS, we utilize only one RF chain since we apply fully analog beamforming.
As mentioned above, we allocate orthogonal RBs for each user at the UxNB-to-HAPS sub-THz links, and for this reason, the received signal for each user can be separately achieved at the HAPS. Finally, utilizing this signal $y^k$, we demodulate and decode the symbol of each user $k$.\par

Now, in the following proposition, we derive the $\mathsf{SINR}$ of each user utilizing the signal $y^k$.

\begin{proposition}\label{propos_sinr}
The achievable rate of the user $k$ in the proposed aerial cell-free scheme with HAPS-assisted backhauling in the THz band is given by
\small
\begin{equation}\label{rate-rev}
   R_k=\log_2(1+\mathsf{SINR}_k) 
\end{equation}
\normalsize
in which $\mathsf{SINR}_k$ is given by (\ref{SINR}) on the top of the next page.
\begin{figure*}
\small
\begin{equation}    \label{SINR}
     \mathsf{SINR}_k=\frac{MG^2NSP_{k}(\sum_{m=1}^M \gamma_{m}\sqrt{P_{km}} \beta_{km})^2}{MG^2\sum_{m=1}^M\rho_{m}^2P_{km}\sum_{k'=1}^{K}\beta_{k'm}^2P_{k'}+MG^2\sum_{m=1}^M \rho_{m}^2P_{km}\sigma^2+\sigma_{H}^2(\sum_{m=1}^M\sum_{k'=1}^{K}\beta_{k'm}^2P_{k'}+M\sigma^2)}.
\end{equation}
\normalsize
\end{figure*}

\end{proposition}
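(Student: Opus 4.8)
The plan is to obtain (\ref{SINR}) by substituting the full transceiver chain into the combined HAPS observation $y^k$ of (\ref{yk}) and then invoking the use-and-then-forget (UatF) bound, so that $\mathsf{SINR}_k$ is written as the squared mean of the coefficient of $s_k$ over the aggregate power of everything else (beamforming/channel uncertainty, inter-user interference, re-emission interference, the forwarded first-hop noise, and the HAPS noise). First I would insert $y_s^k=G\sum_m c_{ms}\gamma_m y_{km}+Z_H$ into (\ref{yk}) and exploit the large-array orthogonality of the HAPS steering vectors in (\ref{cms}), i.e. $\sum_{s=1}^S c_{ms}^* c_{m's}\approx S$ when $m=m'$ and $\approx 0$ otherwise, together with $\sum_{g=1}^G b_{mg}b_{mg}^*=G$ already used in the text. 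This collapses the double spatial sum to $y^k\approx GS\sum_m \gamma_m y_{km}+\tilde{Z}$, where the residual HAPS noise $\tilde{Z}=\sum_m\sum_s c_{ms}^* Z_H$ carries power proportional to $S\sigma_H^2$ per UxNB.

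The second step is to expand $y_{km}=\sqrt{P_{km}}\,y_{km}^{\mathsf{NORM}}$ by substituting $y_{mn}=\sum_{k'}h_{k'mn}\sqrt{P_{k'}}s_{k'}+Z_m$ into the match-filtered, combined signal $y_{km}^{\mathsf{COMB}}=\sum_n y_{mn}\,h_{kmn}^*/|h_{kmn}|$. The $k'=k$ term collapses to the coherent sum $\sqrt{P_k}\,s_k\sum_n|h_{kmn}|$, whereas the $k'\neq k$ terms and the noise term keep the random phases imposed by the match filter and therefore add incoherently. To handle the nonlinear normalization $y_{km}^{\mathsf{NORM}}=y_{km}^{\mathsf{COMB}}/|y_{km}^{\mathsf{COMB}}|$ I would appeal to channel hardening for large $N$ and replace the random magnitude by the total post-combining power $N\big(\sum_{k'}\beta_{k'm}^2 P_{k'}+\sigma^2\big)$, evaluated directly from $\beta_{km}^2=E\{|h_{kmn}|^2\}$ in (\ref{bkm_formula}). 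This factor is exactly what resurfaces, multiplied by $\sigma_H^2$, in the third denominator term of (\ref{SINR}), which is the consistency check I would use to confirm the normalization is being tracked correctly.

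Third, I would fold in the molecular re-emission. The direct sub-THz path carries gain $\gamma_m^2=\rho_m^2\tau_m$ and, because its baseband phase is aligned by the HAPS beamforming in (\ref{bmg})--(\ref{cms}), it is the only contribution that combines coherently across UxNBs, producing the numerator factor $(\sum_m \gamma_m\sqrt{P_{km}}\beta_{km})^2$; the re-emitted fraction carries the uniform random phase $e^{j\omega}$ and adds incoherently. Combining the direct and re-emitted contributions for each incoherently summed quantity (inter-user interference and forwarded first-hop noise) replaces their per-UxNB gain by $\rho_m^2\tau_m+\rho_m^2(1-\tau_m)=\rho_m^2$, which is precisely the $\rho_m^2$ in the first two denominator terms of (\ref{SINR}). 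Collecting the coherent $s_k$ coefficient into the numerator and the inter-user interference, re-emission interference, forwarded first-hop noise, and HAPS noise into the denominator, and accounting for the array-gain factors $N$, $G$, $S$, and the number of UxNBs $M$, then yields (\ref{SINR}) after simplification.

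The hardest part will be the normalization step: since $y_{km}^{\mathsf{NORM}}$ is a ratio functional of the fading, the desired-signal coefficient, the inter-user leakage, and the forwarded noise must all be evaluated after dividing by the same random magnitude, and the entire closed form hinges on justifying the large-$N$ replacement of $|y_{km}^{\mathsf{COMB}}|$ by a deterministic value. Equally delicate is the coherent-versus-incoherent bookkeeping across two arrays and across UxNBs, namely verifying that only the $s_k$ component survives coherently through both the UxNB transmit array, the HAPS receive array, and the sum over $m$, while the cross-user phases, the noise phases, and the re-emission phase $\omega$ all average out. Getting these expectations and this accounting right is what collapses the messier pre-simplification ratio into the clean expression in (\ref{SINR}).
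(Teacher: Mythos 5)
Your plan is correct and follows essentially the same route as the paper's proof in Appendix A: there, too, the chain is substituted into $y^k$, the use-and-then-forget bound is applied with the coherent $s_k$ coefficient $GNS\sum_m \gamma_m\sqrt{P_{km}}\beta_{km}$ in the numerator, the inter-user and forwarded-noise terms summed incoherently, the re-emission handled exactly as you describe by replacing $\gamma_m^2=\rho_m^2\tau_m$ with $\rho_m^2(1-\tau_m)$ so the two combine to $\rho_m^2$, and the HAPS noise contributing $MS\sigma_H^2$. The only cosmetic difference is that where you invoke large-$N$ channel hardening to replace $|y_{km}^{\mathsf{COMB}}|$ by a deterministic value, the paper computes the same replacement explicitly as $F_{\mathsf{NORM}}=\sqrt{E[P_{\mathrm{NF}}P_{\mathrm{NF}}^*]}$, whose square indeed resurfaces multiplied by $\sigma_H^2$ in the third denominator term, exactly as your consistency check anticipates.
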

\begin{proof}
Please see Appendix \ref{proof_sinr}.
\end{proof}

We employed the well-known use-and-then-forget bound \cite{marzetta2016fundamentals}  to derive the rate expression presented in (\ref{rate-rev}). This technique involves performing match-filtering at each UxNB based on the estimated user-to-UxNB channels. Importantly, this method only requires the local knowledge of instantaneous CSI at the UxNBs, while the HAPS solely needs to know the large-scale fadings for the user-to-UxNBs channels.
In our paper, we assume perfect CSI knowledge at the UxNBs. 
At the second time slot, we perform the analog beamforming and combining for the second hops, i.e., UxNB-to-HAPS channels, utilizing the steering vectors of the transmit and receive antennas of the UxNBs and HAPS, respectively. As these vectors change slowly, we assume perfect knowledge of them at both the UxNBs and HAPS. During the decoding phase at the HAPS, the estimated channels for user-to-UxNBs are disregarded, and we decode the signal of each user utilizing steering vectors. Note that for deriving (\ref{rate-rev}), we considered that there is no channel estimation error.

Note that in the THz band, addressing the significant path loss requires the use of a massive number of antenna elements \cite{THz_6g_Precoding}. Implementing digital beamforming techniques would necessitate one RF chain per antenna element, resulting in substantial power consumption. To mitigate this issue, we adopted analog beamforming for transmitting signals from UxNBs to HAPS in the sub-THz band. This approach enables the utilization of just one RF chain per UxNB's transmitter, significantly reducing power consumption. Additionally, analog beamforming is employed at HAPS to combine the received signals from all UxNBs. This further minimizes power requirements by employing a single RF chain for all antenna elements at HAPS.

In terms of 5G terminology, the UxNB nodes in the system model in Fig. \ref{system-model} have the same functionality with distributed units (DUs), and the HAPS has the same functionality with a centralized unit (CU) \cite{ahmadi20195g}. In the study item for the new radio access technology, different functional
splits between the CU and the DU have been studied \cite{cu-du-split}.
Split option 7 itself has three variants, depending on what aspects of the
physical layer processing are performed in the DU and CU. 
Indeed, in option 7, the lower physical layer functions and
RF circuits are located in the DU(s), and the upper protocol layers including the upper
physical layer functions reside in the CU. Our proposed transceiver scheme at each UxNB in Fig. \ref{transc_signal} resides between the option 7-2 and option 7-3 functional split between the CU and DU. This is because we do not demodulate and decode the received signals at DU, and we simply apply match-filtering and power allocation for the received signals. Then, we upconvert the signals to the sub-THz band and transmit them to the CU (HAPS) in orthogonal RBs. It is noteworthy that all of these functions happen in layer 1 (physical layer). We should add that, in general, the links between the DUs and CU are assumed to be perfect when connected with a fiber link. However, in our proposed transceiver scheme for the uplink, at each DU, we have a transmission part to prepare and design a beamformer for the signals, and forward them to the CU wirelessly. This is one of the main differences between our work and terrestrial cell-free schemes in the literature.

\section{Optimization problem}
In the previous section, we derived a closed-form expression in \eqref{SINR} for the achievable rate of each user based on the proposed transceiver and receiver schemes. However, the values of the allocated powers and locations of the UAVs are unknown at this rate formula. In this section, we formulate an optimization problem that maximizes the minimum SINR of users by finding the optimum allocated powers for users in each UxNB (i.e., $\bold{P}=[P_{km}]_{K\times M}$) 
 and the optimum locations of UxNBs (i.e., $\bold{x}=[x_{d,m}]_{1\times M}$ and $\bold{y}=[y_{d,m}]_{1\times M}$). Note that maximizing the minimum SINR of users leads to fair achievable rates among all users. 
We can write the optimization problem as follows:
\small
\begin{alignat}{2}
\text{(P):~~~~    }&\underset{\bold{P},\bold{x},\bold{y}}{\text{max}}~~~        && \underset{k}{\text{min}}  ~~\mathsf{SINR_k}\label{eq:obj_fun}\\
&\text{s.t.} &      &   \sum_{k=1}^KP_{km}\leq P_m, ~~\forall m, \label{eq:constraint-sum}\\
               &&& P_{km}>0, ~~\forall k, ~\forall m, \label{eq:constraint+}\\&&& x_{\text{min}}\leq x_{d,m}\leq x_{\text{max}},~y_{\text{min}}\leq y_{d,m}\leq y_{\text{max}},~\forall m,\label{eq:constraint-range-p}
\end{alignat}
\normalsize
 where constraint (\ref{eq:constraint-sum}) shows the maximum total power constraint at UxNB $m$, and constraint (\ref{eq:constraint-range-p}) indicates the horizontal range of the UxNBs' flight. Problem (P) is a non-convex optimization problem due to its objective function, which is not concave with respect to variables $\bold{P}$, $\bold{x}$, and $\bold{y}$. In order to solve this problem, first, based on the BCD method \cite{razaviyayn2013unified}, we split the optimization problem into two sub-problems. In the first sub-problem, we solve the problem (P) for the case where locations of the UxNBs (i.e., $\bold{x}$ and $\bold{y}$) are given, and in the second optimization sub-problem, the power allocation coefficients $\bold{P}$ are assumed to be given. It is important to stress that the power allocation problem can be  equivalent to the case where we do not have control of the locations of the UxNBs, and they are non-dedicated aerial users that perform their own missions, and we utilize them as APs of the cell-free scheme as well. Both power allocation and deployment sub-problems are still non-convex, and we solve them by means of the bisection and SCA methods in the following two sub-sections, respectively.
 
 \subsection{Power Allocation Sub-Problem}
If we fix the locations of the UxNBs in (P), we will have the power allocation sub-problem as follows:
\small
\begin{alignat}{2}
\text{(P1):~~~~    }&\underset{\bold{P}}{\text{max}}~~~        && \underset{k}{\text{min}}  ~~\mathsf{SINR_k}\label{eq:obj_fun_P}\\
&\text{s.t.} &      &   \sum_{k=1}^KP_{km}\leq P_m, ~~\forall m, \label{eq:constraint-sum_P}\\
               &&& P_{km}>0, ~~\forall k, ~\forall m. \label{eq:constraint+_P}
\end{alignat}
\normalsize
This problem is still non-convex with respect to power allocation coefficients $\bold{P}$.
\begin{proposition}\label{propos_quasi}
The optimization problem (P1) is a quasi-concave optimization problem. 
\end{proposition}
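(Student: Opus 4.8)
The plan is to verify the definition of a \emph{quasi-concave maximization problem}: a convex feasible set together with a quasi-concave objective, which also justifies the bisection procedure used later. The feasible region is immediate, since constraints (\ref{eq:constraint-sum_P}) and (\ref{eq:constraint+_P}) are linear in the entries of $\bold{P}$ and hence carve out a convex set. The real work is to show that the objective $\min_k \mathsf{SINR}_k$ is quasi-concave, and by the standard fact that a pointwise minimum of quasi-concave functions is quasi-concave, it suffices to establish quasi-concavity of each $\mathsf{SINR}_k$ in (\ref{SINR}).

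First I would exploit the structure of (\ref{SINR}). For a fixed user $k$, the only optimization variables that appear are the $M$ entries $\{P_{km}\}_{m=1}^M$ of the $k$-th row of $\bold{P}$, because $P_k$ and $P_{k'}$ are the fixed user transmit powers, so each quantity $\sum_{k'}\beta_{k'm}^2 P_{k'}$ is a constant. Writing $p_m := P_{km}$, the expression collapses to the canonical form $\mathsf{SINR}_k = \frac{A_k\left(\sum_{m=1}^M \gamma_m\beta_{km}\sqrt{p_m}\right)^2}{\sum_{m=1}^M c_m p_m + D_k}$, where $A_k$, $c_m$, and $D_k$ are positive constants, i.e.\ a positive constant times the square of a nonnegative weighted sum of $\sqrt{p_m}$, divided by a positive affine function of $p$.

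Next I would prove quasi-concavity of $\mathsf{SINR}_k$ directly through its superlevel sets. For a threshold $t\ge 0$, the condition $\mathsf{SINR}_k \ge t$ is equivalent to $A_k\big(\sum_m \gamma_m\beta_{km}\sqrt{p_m}\big)^2 - t\big(\sum_m c_m p_m + D_k\big) \ge 0$. The crucial step is to show that the squared numerator $\big(\sum_m \gamma_m\beta_{km}\sqrt{p_m}\big)^2$ is concave on the nonnegative orthant: expanding the square yields the linear terms $\sum_m \gamma_m^2\beta_{km}^2 p_m$ together with cross terms proportional to $\sqrt{p_m p_{m'}}$, each of which is a geometric mean and therefore concave, while all coefficients $\gamma_m\gamma_{m'}\beta_{km}\beta_{km'}$ are nonnegative. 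The whole term is thus a nonnegative combination of concave functions, hence concave. Since the denominator is affine, the left-hand side above is a concave function minus an affine function, hence concave, so its zero-superlevel set is convex; this makes each $\mathsf{SINR}_k$ quasi-concave.

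Finally I would assemble the pieces: the superlevel set of the objective at level $t$ is $\bigcap_k\{p : \mathsf{SINR}_k \ge t\}$, an intersection of convex sets and therefore convex, so $\min_k \mathsf{SINR}_k$ is quasi-concave; together with the convex feasible set this establishes that (P1) is a quasi-concave program. As a byproduct, this shows that for each fixed $t$ the feasibility test $\mathsf{SINR}_k \ge t$ for all $k$ is a convex problem, which is precisely what makes the bisection solution valid. I expect the concavity of the squared sum-of-square-roots numerator to be the main obstacle, since the square of a concave function is not concave in general; everything else follows from routine convexity bookkeeping once that fact is in hand.
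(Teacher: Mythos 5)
Your proof is correct, but it takes a genuinely different route from the paper's. The paper performs the variable change $T_{km}=\sqrt{P_{km}}$ and shows that, in the new coordinates, the upper-level set $\{\mathbf{T}:\min_k \mathsf{SINR}_k > t\}$ is, for each $k$, a second-order-cone condition --- the square root of the (quadratic-plus-constant) denominator bounded above by an affine function of $\mathbf{T}$, as in (\ref{uls}) --- while the power constraint becomes $\sum_{k} T_{km}^2 \le P_m$, also convex; quasi-concavity is thus established for the transformed problem, which is equivalent to (P1) via the componentwise square-root bijection on the positive orthant. You instead stay in the original variables $P_{km}$ and prove each $\mathsf{SINR}_k$ quasi-concave directly: your key observation that $\bigl(\sum_{m}\gamma_m\beta_{km}\sqrt{P_{km}}\bigr)^2$ is concave on the nonnegative orthant --- by expanding into linear terms plus nonnegatively weighted geometric means $\sqrt{P_{km}P_{km'}}$ --- plays exactly the role of the paper's change of variables, since a ratio of a nonnegative concave numerator over a positive affine denominator has convex superlevel sets; you also correctly note that the $\sum_{k'}\beta_{k'm}^2 P_{k'}$ terms are constants because the $P_{k'}$ are fixed user transmit powers, so each $\mathsf{SINR}_k$ depends only on the $k$-th row of $\mathbf{P}$. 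Both arguments exploit the same underlying structure (numerator linear in $\sqrt{P_{km}}$, denominator affine in $P_{km}$), but yours buys a proof of quasi-concavity of (P1) literally as stated, without the (easy but unstated) equivalence step under the nonlinear reparametrization, whereas the paper's formulation buys the explicit convex (SOC-type) feasibility problem in $\mathbf{T}$ that is handed to the CVX solver in the bisection step of Algorithm 1 --- a connection you recover at the end when you observe that each fixed-$t$ feasibility test is convex. One cosmetic point: quasi-concavity requires convex superlevel sets for all $t\in\mathbb{R}$; for $t<0$ the set is the whole (convex) domain since $\mathsf{SINR}_k>0$, which is worth a one-line remark.
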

\begin{proof}
Please see Appendix \ref{proof_quasi}.
\end{proof}

Since problem (P1) is a quasi-concave optimization problem, its optimal solution can be found efficiently by the bisection
method \cite{boyd2020disciplined}. To do this, we rewrite the problem (P1) by adding slack variable $\eta$ as follows:
\small
\begin{alignat}{2}
\text{(P2):~~~~    }&\underset{\bold{P},\eta}{\text{max}}~~~ \eta      && \label{eq:obj_fun_P}\\
&\text{s.t.} &      &  \mathsf{SINR_k}\geq \eta, ~\forall k \label{eq:constraint+eta-sinr} \\&&&\sum_{k=1}^KP_{km}\leq P_m, ~~\forall m, \label{eq:constraint-sum_P}\\
               &&& P_{km}>0, ~\eta >0, ~\forall k, ~\forall m. \label{eq:constraint+_P}
\end{alignat}
\small
\normalsize
It can be easily proven that (P2) is equivalent to (P1).
For this, it is noteworthy that the constraint (\ref{eq:constraint+eta-sinr}) at (P2) can be rewritten as $\min (\mathsf{SINR_1},...,\mathsf{SINR_K})\geq \eta$, and since we aim to maximize $\eta$, the optimal $\eta$ is given by $\eta=\min (\mathsf{SINR_1},...,\mathsf{SINR_K})$ which is the same with the objective function of (P1). We can also prove that at the optimal solution for (P2), we must have $\eta_{\mathrm{equi}}^*=\mathsf{SINR_1}=...=\mathsf{SINR_K}$. For this, consider that the SINR value of user $k1$ is greater than other users. This leads to a decrease in the SINR value of one other user $k2$ which is because of power constraint (\ref{eq:constraint-sum_P}) in (P2). Therefore, we can write $\mathsf{SINR_{k2}}\leq\mathsf{SINR_1}=...=\mathsf{SINR_K}\leq\mathsf{SINR_{k1}}$. This means that we have $\eta_{\mathrm{non-equi}}=\min (\mathsf{SINR_1},...,\mathsf{SINR_K})=\mathsf{SINR_{k2}}$. Obviously, this value is less than $\eta_{\mathrm{equi}}^*$, i.e., $\eta_{\mathrm{non-equi}}\leq\eta_{\mathrm{equi}}^*$, and therefore the proof is completed.
By performing the variable change $\bold{T}=[P_{km}^2]_{K\times M}$, for any given value of $\eta$, problem (P2) will be a convex feasibility
problem that can be solved optimally by standard convex optimization solvers. In this paper, we utilize CVX solver \cite{cvx} that applies the interior point method to solve these convex problems.

\subsection{UxNBs Placement Sub-Problem}
For a given allocated powers in (P), we will have the following placement sub-problem:
\small
\begin{alignat}{2}
\text{(P3):~~~~    }&\underset{\bold{x},\bold{y}}{\text{max}}~~~        && \underset{k}{\text{min}}  ~~\mathsf{SINR_k}\label{eq:obj_fun_P}\\
&\text{s.t.} &      &   x_{\text{min}}\leq x_{d,m}\leq x_{\text{max}},~y_{\text{min}}\leq y_{d,m}\leq y_{\text{max}},~\forall m.\label{eq:constraint-range}
\end{alignat}
\normalsize

This problem is still non-convex with respect to variables $\bold{x}$ and $\bold{y}$. By substituting the SINR formula from (\ref{SINR}) and introducing three slack variables $\eta$, $\bold{t}=[t_{k}]_{1\times K}$, and $\boldsymbol{\beta}=[\beta_{km}]_{K\times M}$, (P3) can be rewritten as follows:
\small
\begin{alignat}{2}
&&&\text{(P4):~~~~  }\underset{\bold{x},\bold{y},\eta,\bold{t},\boldsymbol{\beta}}{\text{max}}~~~         \eta \label{eq:obj_fun_P}\\
&&&\text{s.t.}   ~~~~ \frac{\sqrt{MG^2NSP_{k}}(\sum_{m=1}^M \gamma_{m}\sqrt{P_{km}} \beta_{km})}{t_k}\geq \sqrt{\eta}, ~\forall k, \label{sinr_constraint-p4}\\&&& ~~~~~~~~t_k^2 \geq MG^2\sum_{m=1}^M\rho_{m}^2P_{km}\sum_{k'=1}^{K}\beta_{k'm}^2P_{k'}+MG^2\sum_{m=1}^M \rho_{m}^2P_{km}\sigma^2 \label {t_con-p4}  \\&&& \nonumber~~~~~~~~~~~~~~~~~~~~~~~~~~~~~~+\sigma_{H}^2(\sum_{m=1}^M\sum_{k'=1}^{K}\beta_{k'm}^2P_{k'}+M\sigma^2), ~\forall k, \\&&&
 ~~~~~~~~\beta_{km}^{-1} \geq  \sqrt{(x_{u,k}-x_{d,m})^2+(y_{u,k}-y_{d,m})^2+(z_{d,m})^2} \label{beta_co-p4} \\&&&  ~~~~~~~~~~~~~~~~~~~~~~~~~~\nonumber\times 10^{\frac{P_{km}^{\mathsf{LoS}}\eta_{\mathsf{LoS}}^{\mathsf{dB}}+P_{km}^{\mathsf{NLoS}}\eta_{\mathsf{NLoS}}^{\mathsf{dB}}}{20}}\beta_0^{-\frac{1}{2}}, ~\forall k, ~\forall m, 
\\
&&&  ~~~~~~~~x_{\text{min}}\leq x_{d,m}\leq x_{\text{max}},~y_{\text{min}}\leq y_{d,m}\leq y_{\text{max}},~\forall m, \label{eq:constraint-range-p4}\\&&&
 ~~~~~~~~ \eta > 0, ~t_k > 0,~ \beta_{km} > 0, ~\forall k, ~\forall m.
\label{eq:constraint-pos-p4}
\end{alignat}
\normalsize
The equivalency proof of (P4) with (P3) is similar to the one for (P2) and (P1). Indeed, the constraint (\ref{sinr_constraint-p4}) at (P4) can be written as $\mathsf{SINR_k}\geq \eta, \forall k$, that is equivalent to $\min (\mathsf{SINR_1},...,\mathsf{SINR_K})\geq \eta$. Since we aim to maximize $\eta$, the optimal $\eta$ is given by $\eta=\min (\mathsf{SINR_1},...,\mathsf{SINR_K})$ which is the same with the objective function of (P3). Note that the numerator of the left-hand side of (\ref{sinr_constraint-p4}) is the square of the numerator of the SINR formula in (\ref{SINR}). Also, the slack variable $\bold{t}=[t_{k}]_{1\times K}$ in constraint (\ref{t_con-p4}) indicated the square of the denominator of the SINR formula in (\ref{SINR}). The slack variable $\boldsymbol{\beta}=[\beta_{km}]_{K\times M}$ in constraint (\ref{beta_co-p4}) shows the large-scale channel gain in (\ref{bkm_formula}) for the link between user $k$ and UxNB $m$. By taking the logarithm of both sides of constraint (\ref{t_con-p4}), it will be in the form of a norm function of variable $\boldsymbol{\beta}$ less than an affine function of variable $\bold{t}$, and hence it is a convex set. However, this problem is still non-convex due to constraints (\ref{sinr_constraint-p4}) and (\ref{beta_co-p4}). To address constraint (\ref{sinr_constraint-p4}), we perform a variable change such that $\eta=\zeta^4$. Hence, for the optimization problem (P4) we can write the following:
\small
\begin{alignat}{2}
\text{(P5):~~~~  }&\underset{\bold{x},\bold{y},\zeta,\bold{t},\boldsymbol{\beta}}{\text{max}} ~~~~       && \zeta \label{eq:obj_fun_P}\\
&\text{s.t.}   && \frac{1}{t_k}\geq \frac{\zeta^2}{\sqrt{MG^2NSP_{k}}(\sum_{m=1}^M \gamma_{m}\sqrt{P_{km}} \beta_{km})}, ~\forall k, \label{sinr_constraint_zeta}\\&&& (\ref{t_con-p4}),~ (\ref{beta_co-p4}), ~(\ref{eq:constraint-range-p4}), ~(\ref{eq:constraint-pos-p4}).
\label{eq:constraints-p4}
\end{alignat}
\normalsize
One can see that the right-hand side of (\ref{sinr_constraint_zeta}) is in the form of a quadratic function of variable $\zeta$ over an affine function of variable $\boldsymbol{\beta}$, which is known to be a convex function when its denominator is positive \cite{boyd2004convex}. Also, the right-hand side of (\ref{beta_co-p4}) is in the form of a norm function of variables $\bold{x}$ and $\bold{y}$, and so it is convex. However, the left-hand side of the constraints (\ref{beta_co-p4}) and (\ref{sinr_constraint_zeta}) (i.e., $\frac{1}{\beta_{km}}$ and $\frac{1}{t_{k}}$) are not in the form of a concave function. In order to manage these non-concave terms, we propose an iterative scheme based on the SCA method \cite{Traj_Omid}. In this method, the original non-convex problem is optimized by solving convex approximations of the original problem iteratively around an initial point until convergence. 
We know that the first-order Taylor series expansion of a convex function $f(z)$ provides a global lower-bound for that function, i.e., $f(z)\geq f(z_0)+\nabla f(z_0)^T(z-z_0)$.
Given that the left-hand side of the constraints (\ref{beta_co-p4}) and (\ref{sinr_constraint_zeta}) are in the form of a convex function, we can approximate them at iteration $l+1$ by their first-order Taylor series expansion around the solution of the previous iteration $l$. Therefore, at the iteration ($l+1$), we replace the  left-hand side of the constraints (\ref{beta_co-p4}) and (\ref{sinr_constraint_zeta}) with the expressions
      $-\frac{1}{(\beta_{km}^l)^2}(\beta_{km}-\beta_{km}^{l})+\frac{1}{\beta_{km}^l}$ and $ -\frac{1}{(t_{k}^l)^2}(t_{k}-t_{k}^{l})+\frac{1}{t_{k}^l}$, respectively. It is clear that these functions are affine functions with respect to variables $\bold{t}$ and $\boldsymbol{\beta}$. Now, by replacing these approximations in (P5), the optimization problem at the iteration $l+1$ of the SCA method around the initial points $\bold{t}^{l}$ and $\boldsymbol{\beta^{l}}$ is given by
      
  \small    
      \begin{alignat}{2}
&&&\text{(P6):~~~~  }\underset{\bold{x},\bold{y},\zeta,\bold{t},\boldsymbol{\beta}}{\text{max}}~~~         \zeta \label{eq:obj_fun_P}\\
&&&\text{s.t.}   ~~~~ -\frac{1}{(t_{k}^l)^2}(t_{k}-t_{k}^{l})+\frac{1}{t_{k}^l}\geq \frac{\zeta^2}{\sqrt{MG^2NSP_{k}}(\sum_{m=1}^M \gamma_{m}\sqrt{P_{km}} \beta_{km})}, ~\forall k, \label{sinr_constraint_zeta_2}\\&&&  ~~~~~~~~
-\frac{1}{(\beta_{km}^l)^2}(\beta_{km}-\beta_{km}^{l})+\frac{1}{\beta_{km}^l} \geq 10^{\frac{P_{km}^{\mathsf{LoS}}\eta_{\mathsf{LoS}}^{\mathsf{dB}}+P_{km}^{\mathsf{NLoS}}\eta_{\mathsf{NLoS}}^{\mathsf{dB}}}{20}} \label{beta_co} \\&&& \nonumber  ~~~~~~~~~~~~ \times \beta_0^{-\frac{1}{2}}\sqrt{(x_{u,k}-x_{d,m})^2+(y_{u,k}-y_{d,m})^2+(z_{d,m})^2}, ~\forall k, ~\forall m, 
\\
&&&  ~~~~~~~~(\ref{t_con-p4}), ~(\ref{eq:constraint-range-p4}), ~(\ref{eq:constraint-pos-p4}).
\label{eq:constraint-pos}
\end{alignat}
\normalsize

This optimization problem is convex since all of the constraints are in the form of a convex function of variables less than a concave function of variables. This problem can be efficiently solved by convex optimization techniques, such as the interior-point method. The optimum value for the objective function of (P4) can be derived from $\eta=\zeta^{\frac{1}{4}}$. 


\subsection{Iterative Algorithm for Joint Power Allocation and Placements of UxNBs}
In this subsection, we apply the BCD method \cite{razaviyayn2013unified} to solve the original problem (P), and find optimized values for power allocation and location variables. For this, we solve the placement of UxNBs and power allocation sub-problems alternately. First, we initialize the power allocation variables and the locations of UxNBs. Then, at each iteration, we optimize the location variables utilizing SCA method. Next, utilizing these locations, we find the optimum power allocation with the aid of the bisection method. These powers are used as initial values in the next iterations. The associated algorithm is summarized in Algorithm 1. 

Now, we prove the convergence of Algorithm 1. At iteration $l$ of Algorithm 1, problems (P6) and (P2) are solved. The minimum SINRs of users at iteration $l$ of problems (P6) and (P2)  are shown with $\eta_{\mathrm{P6,lb}}^l$ and $\eta_{\mathrm{P2}}^l$, respectively. Note that in constraint (\ref{sinr_constraint_zeta_2}) of (P6), we applied the first-order Taylor expansion for the approximation of a convex function. Since the Taylor series expansion of a convex function provides a global lower bound for that function, for the left-hand side of constraint (\ref{sinr_constraint_zeta_2}), we can write $\frac{1}{t_{k}^{l+1}}\geq-\frac{1}{(t_{k}^l)^2}(t_{k}^{l+1}-t_{k}^{l})+\frac{1}{t_{k}^l}$. Therefore, the minimum SINR of the original problem (P), which is derived by replacing the optimized locations in (P6) and is shown with $\eta_{P,\mathrm{location}}^{l+1}$, is lower-bounded by the approximated minimum SINR in (P6), i.e., $\eta_{\mathrm{P6,lb}}^{l+1}$. In other words, we have $\eta_{P,\mathrm{location}}^{l+1}\geq\eta_{\mathrm{P6,lb}}^{l+1}$. Moreover, since we maximize  the objective function of (P6) at iteration $l+1$, we can write $\eta_{\mathrm{P6,lb}}^{l+1}\geq\eta_{\mathrm{P6,lb}}^l$. Considering this point that the Taylor expansion of a function around an initial point is exactly equal to the value of the original function at that point, i.e., $\eta_{P,\mathrm{location}}^{l}=\eta_{\mathrm{P6,lb}}^l$, we can conclude that $\eta_{P,\mathrm{location}}^{l+1}\geq\eta_{P,\mathrm{location}}^{l}$, which means that the objective function of the original problem (P) is non-decreasing with $l$ for location problem (P6). For the power allocation problem (P2), because there is no approximation for SINR of users,  the minimum SINR of the original problem (P), which is derived by replacing the optimized powers in (P2) and is shown with $\eta_{P,\mathrm{power}}^{l+1}$, is equal to the minimum SINR in (P2), i.e., $\eta_{\mathrm{P2}}^{l+1}$. In other words, we have $\eta_{P,\mathrm{power}}^{l+1}=\eta_{\mathrm{P2}}^{l+1}$. Since the bisection method for solving (P2) gives the optimal values for power variable, we can write $\eta_{P,\mathrm{power}}^{l+1}\geq\eta_{P,\mathrm{power}}^{l}$, which means that the objective function of the original problem (P) is non-decreasing with $l$ for power problem (P2). Since the Taylor series expansion of a convex function provides a global lower bound for that function, the proposed sub-optimal algorithm for (P) is upper-bounded by the optimal solution of (P) which is shown by $\eta_{P}^*$. Therefore, we can write $...\geq\eta_{P,\mathrm{location}}^{l}\geq\eta_{P,\mathrm{power}}^{l}\geq\eta_{P,\mathrm{location}}^{l+1}\geq\eta_{P,\mathrm{power}}^{l+1}\geq...\geq\eta_{P}^*$. Because the proposed sub-optimal algorithm to solve (P) leads to a non-decreasing objective function over iterations and is globally upper-bounded, it is guaranteed to converge.

\begin{algorithm}
\caption{Iterative algorithm based on BCD to jointly find the power allocation and placement of UxNBs in problem P.}\label{alg1}
\begin{algorithmic}[1]
\State  Initialize power allocation variables $\bold{P}^{l}$, and UxNB locations $\bold{x}^{l}$ and $\bold{y}^{l}$, and let $l=0$. 
\State Initialize the values of $\eta_{\mathsf{min}}$ and $\eta_{\mathsf{max}}$ for the bisection method, where $\eta_{\mathsf{min}}$ and $\eta_{\mathsf{max}}$ show a range for the minimum $\mathsf{SINR}$ of users. Choose a tolerance $\epsilon>0$.
\Repeat
\State Solve the convex placement problem (P6) with the given powers $\bold{P}^{l}$ and initial locations $\bold{x}^{l}$ and $\bold{y}^{l}$ by the interior-point method (utilizing the CVX solver \cite{cvx}) and find the optimum values for  $\mathbf{x}$ and $\mathbf{y}$.
\Repeat
\State Set $\eta=\frac{\eta_{\mathsf{min}} +\eta_{\mathsf{max}}}{2}$.
\State Solve the convex feasibility
problem (P2) by the interior-point method (utilizing the CVX solver \cite{cvx}) with the given UxNBs locations $\bold{x}$ and $\bold{y}$, and find the optimal values for $\bold{P}$.
\State If the problem (P2) is feasible, set $\eta_{\mathsf{min}}=\eta$; else set $\eta_{\mathsf{max}}=\eta$.
\Until {$\eta_{\mathsf{max}} -\eta_{\mathsf{min}}<\epsilon$.}
\State Update $l=l+1$; and set $\mathbf{x}^l=\mathbf{x}$, $\mathbf{y}^l=\mathbf{y}$, and $\mathbf{P}^l=\mathbf{P}$.
\Until {Convergence or a maximum number of iterations is reached.}
\end{algorithmic}
\end{algorithm}

\subsection{Computational Complexity Analysis}
Now, the computational complexity of the proposed iterative solution for problem (P) in Algorithm 1 is presented. At  iteration $l$ of the BCD method in Algorithm 1, computational complexity is dominated by solving convex problems in (P2) and (P6). These convex problems are solved using the interior point method. According to \cite{boyd2004convex}, the interior point method
 requires $\log(\frac{n_c}{t^0\varrho})/\log\varepsilon$ number of iterations
(Newton steps) to solve a convex problem, where $n_c$ is the total
number of constraints, $t^0$ is the initial point for approximating the accuracy of the interior-point
method, $0<\varrho\ll1$ is
the stopping criterion, and $\varepsilon$ is used for updating the accuracy
of the interior point method. For (P2) and (P6), number of constraints are equal to $n_c^p=2KM+K+M$ and $n_c^x=2KM+3K+4M+1$, respectively. Note that superscripts $p$ (for power) and $x$ (for location) refer to problems (P2) and (P6), respectively. Therefore, the computational complexity of Algorithm 1 will be  $\mathcal{O}\Big(N_{\mathrm{iter}}(\frac{\log(\frac{KM}{t^{0,x}\varrho^{x}})}{\log\varepsilon^{x}}+N_{\mathrm{iter}}^b \frac{\log(\frac{KM}{t^{0,p}\varrho^{p}})}{\log\varepsilon^{p}})\Big)$, where $N_{\mathrm{iter}}$ and $N_{\mathrm{iter}}^b$ indicate the number of iterations for convergence of Algorithm 1 and the bisection method, respectively. Note that the optimization problem (P) formulated in this paper is non-convex with respect to its power and location variables, and hence the complexity of finding the optimal values for its variables is NP-hard \cite{Traj_Omid}. However, at each iteration of the proposed method in Algorithm 1, computational complexity is dominated by solving two convex problems. Therefore, the proposed method in Algorithm 1 is a low-complexity solution that finds optimized values for power and location variables.

Note that the LoS channel model for UxNB to HAPS links can be easily generalized for the NLoS case. In scenarios where the UxNB to HAPS links operate in lower frequency bands or UxNBs fly at lower altitudes, we can simply utilize the same probabilistic channel model in (\ref{prob_hkmn}) and \eqref{bkm_formula} for users to UxNB links. In this case, since the LoS path dominates the NLoS paths, we can continue employing the same analog beamforming coefficients that we applied in the LoS case for UxNB transmit antennas. Additionally, the channel hardening benefit offered by massive MIMO \cite{emil} enables the utilization of only large-scale fading for calculating the achievable rates of the UxNB to HAPS links which allows us to employ the same SINR formula as in \eqref{SINR} for the NLoS scenario. Consequently, the optimization problem defined in (P) and its corresponding solution in Algorithm 1 can be applied to this NLoS scenario as well.

 Finally, it is important to highlight that the optimization problem (P) presented in the paper aims to find the optimized locations of UAVs and power allocations within a single time slot. However, it is worth noting that this problem, along with the corresponding solution described in Algorithm 1, can be readily extended to accommodate scenarios where UAVs are mobile across multiple time slots. In such cases, the objective function can be modified to minimize the maximum SINR of all users over the entire duration of $T$ time slots. Additionally, to account for UAV movement constraints during each time slot, an additional convex constraint can be introduced. By incorporating these modifications, the proposed method outlined in Algorithm 1 can effectively determine the allocated powers for each user at each UxNB over all time slots and optimize the trajectory of all UAVs across the $T$ time slots.

\section{Numerical Results}
In this section, numerical results are provided in order to show the performance gain of the proposed scheme. The following default parameters are applied in the simulations except that we specify different values for them. For the carrier frequency of the first and second hops, we assume $f_\mathsf{sub6}=2~\mathrm{GHz}$ and $f_\mathsf{THz}=120~\mathrm{GHz}$, respectively. The communication bandwidth is assumed to be $\mathsf{BW}=1~\mathrm{MHz}$, and the noise power spectral density is $N_0=-174~\mathrm{dBm/Hz}$. We assume that all users are uniformly distributed over a square area with a length of $1000$ meters. Also, we assume that the HAPS is deployed in the middle of this square area. The default values for the number of users, number of UxNBs, number of antenna elements in the receive UPA of each UxNB, number of antenna elements in the transmit UPA of each UxNB, and number of antenna elements in the receive UPA of the HAPS are equal to $K=16$, $M=16$, $N=4$, $G=9$, and $S=400$, respectively. 
 We set the maximum transmit power at each user and each UxNB as $P_k=0.2~W,~\forall k$ and $P_m=25 ~\mathrm{dBm},~\forall m$, respectively. We assume that all users send their signals with their maximum power. Considering an urban area, the excessive path loss affecting the air-to-ground links in LoS and NLoS cases is assumed to be $\eta_{\mathsf{LoS}}^{\mathsf{dB}}=1~\mathsf{dB}$ and $\eta_{\mathsf{NLoS}}^{\mathsf{dB}}=20~\mathsf{dB}$, respectively \cite{Irem}. Also, for the urban area, we have $A=9.61$ and $B=0.16$ \cite{Irem}.  The absorption coefficient of the sub-THz medium for $f_\mathsf{THz}=120~\mathrm{GHz}$ is equal to $K=0.5~\mathrm{dB/km}$ and the effective height is given by $h^{e}=1.6~ \mathrm{km}$ \cite{grace2011broadband}. We assume an initial uniform square placement for $M$ UAVs,
 and a fixed flight height $z_{d,m}=120~\mathrm{m},~\forall m$ for all of the UAVs. In Algorithm 1, we set initial values as $\eta_{\mathsf{min}}=0$ and $\eta_{\mathsf{max}}=1500$, and also the tolerance value as $\epsilon=0.01$. We also consider the HAPS altitude as 20 km. 
 
 In the simulation figures, the proposed scheme, referred to as the \textbf{aerial cell-free scheme}, is compared with two well-known baseline schemes. \textbf{1) Aerial cellular scheme} in which each terrestrial user is served by only one UxNB \cite{Elham_conf, Irem}. All other parameters and channel models are the same as those in the proposed scheme. The backhauling of this baseline scheme is also conducted through the HAPS in the sub-THz band. \textbf{2) Terrestrial cell-free scheme} in which each terrestrial user is served by multiple terrestrial access points, and a perfect backhaul with fiber links is assumed to connect the APs and CPU \cite{ngo, Manijeh}. In this baseline scheme, it is assumed that there is no LoS link between terrestrial users and APs, so a Rayleigh fading model is considered for these channels. The path loss exponent is assumed to be 2 and 3.7 for LoS and NLoS links, respectively. Furthermore, the optimization algorithms applied in the baseline schemes are similar to the one used for the proposed scheme, except for the differences in the channel models for the user-to-AP and AP-to-HAPS links. To ensure a fair comparison, the number of receiver antenna elements at terrestrial access points is assumed to be the same as the number of receiver antenna elements at UxNBs in aerial schemes (i.e., $N$). The match-filtering method is applied in each access point in both baseline schemes to align the received signals.
 
\begin{figure}[t]
\centering
  \includegraphics[width=\linewidth,height=5 cm]{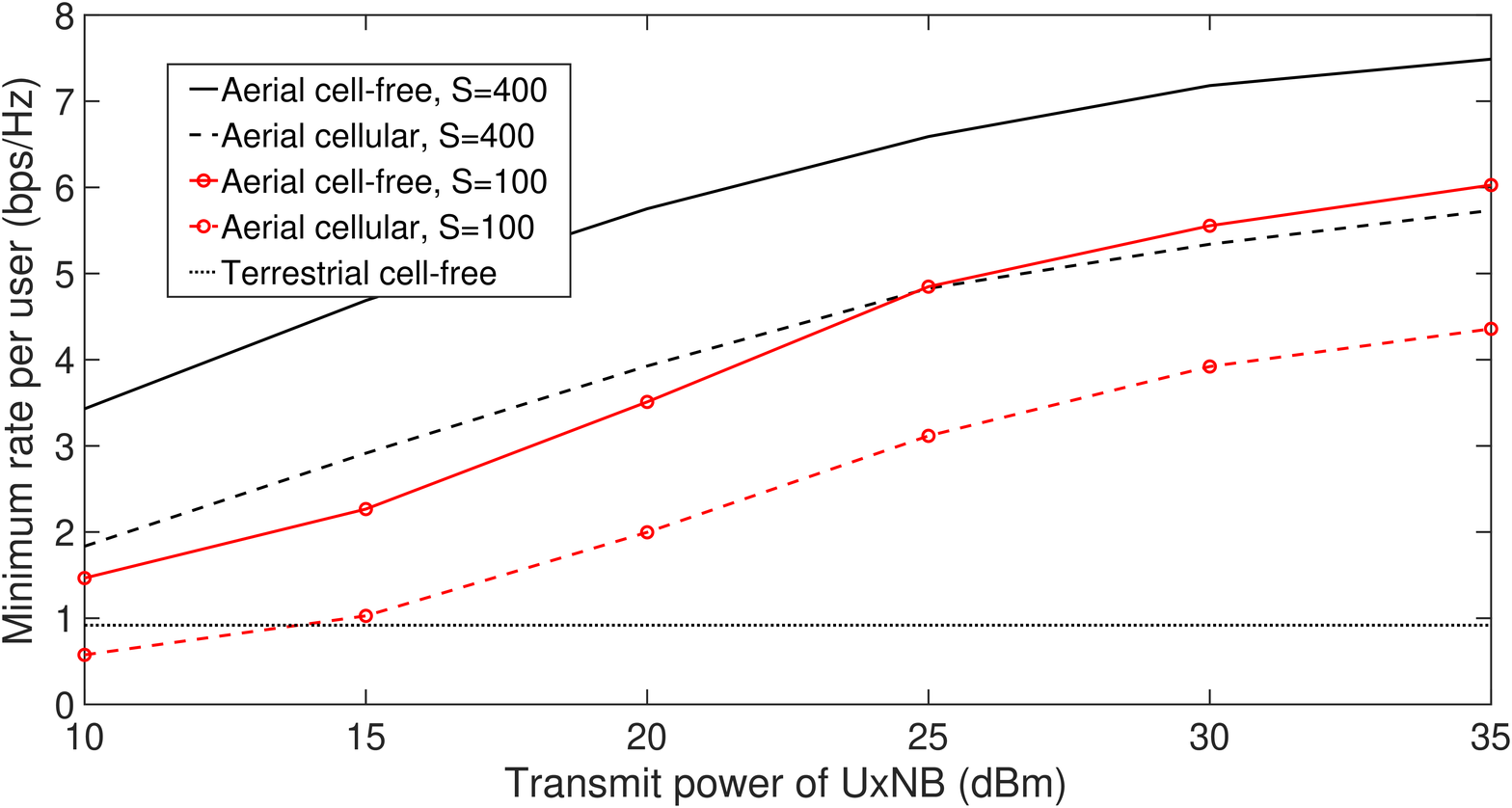}
  \captionof{figure}{The achievable minimum rate per user versus the total power of each UxNB ($P_1=...=P_M$) for the aerial and terrestrial BSs with cell-free and cellular schemes. We set $K=16$, $M=16$, $N=4$, and $G=9$.}
  \label{R_vs_P_K16_M16_S400_100}
\end{figure}

Fig. \ref{R_vs_P_K16_M16_S400_100} shows the achievable minimum rate per user versus the total power of each UxNB ($P_1=...=P_M$) for the aerial and terrestrial BSs with cell-free and cellular schemes. 
We can see that the proposed aerial cell-free scheme has a better performance compared with the aerial cellular scheme for both values of $S=100$ and $S=400$. Indeed, due to the severe intercell interference from other users in the neighboring cells in the aerial cellular scheme, our proposed scheme has much better performance than this baseline scheme. Also in this figure, we can see that the terrestrial cell-free baseline scheme has a constant value by increasing APs powers since we considered a perfect backhaul for this scheme. Our proposed scheme has much better performance than this baseline scheme as well. This is because in the terrestrial cell-free scheme, due to a high path loss and shadowing, the link between a user and far access points can be very weak, and hence working in cell-free mode is not useful. However, in the proposed aerial cell-free scheme, since there is a strong LoS link between the users and UxNBs, the signal of each user is received at multiple UxNBs, and so the cell-free scheme is useful for the proposed system model.

\begin{figure}
\centering
  \includegraphics[width=\linewidth,height=5cm]{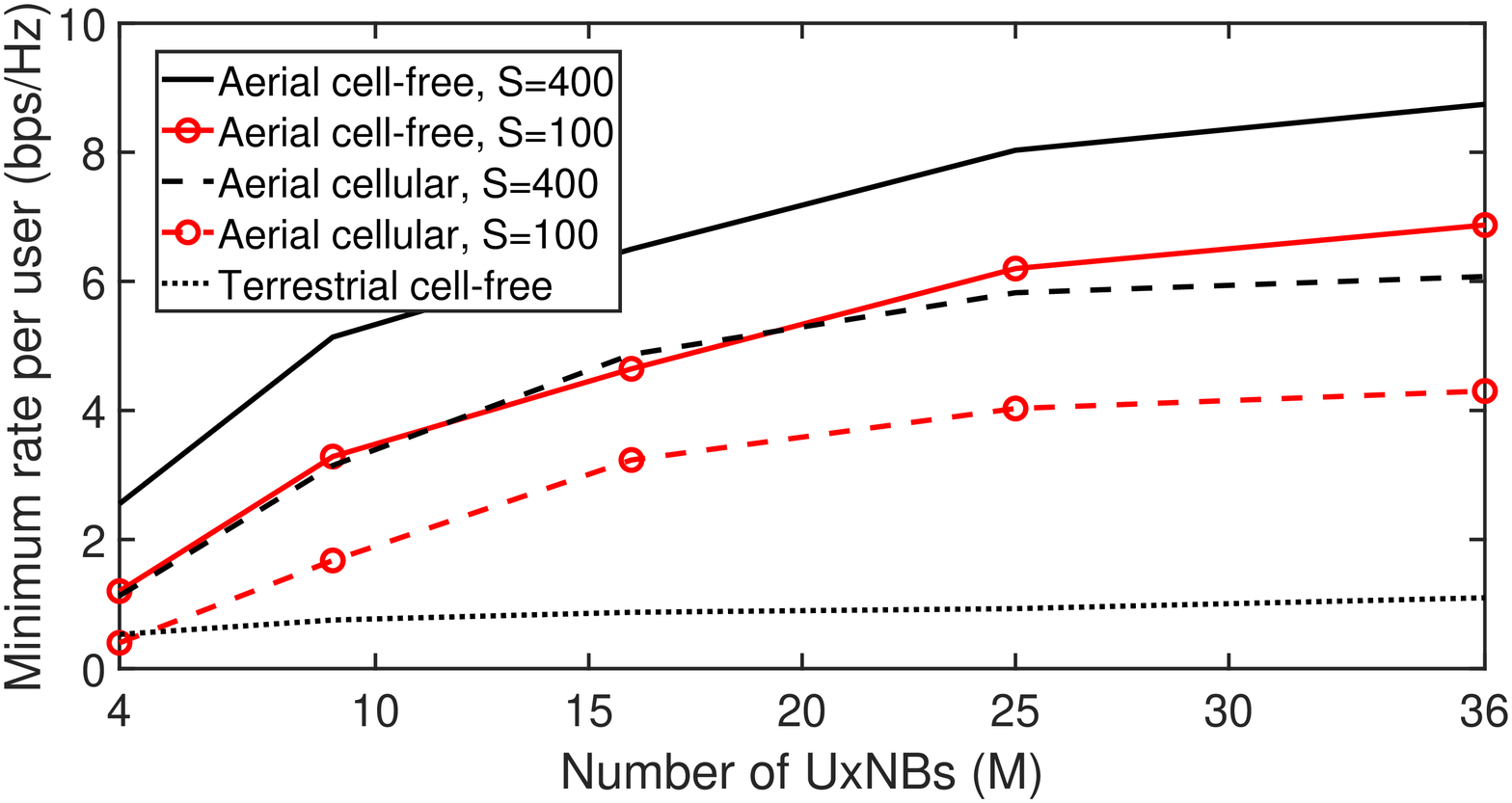}
  \captionof{figure}{The achievable minimum rate per user versus the number of UxNBs ($M$) for the aerial and terrestrial BSs with cell-free and cellular schemes. We set $P_m=25~\mathrm{dBm},~\forall m$, $K=16$, $N=4$, and $G=9$.}
  \label{R_vs_M_K16_P25}
\end{figure}

Fig. \ref{R_vs_M_K16_P25} shows the achievable minimum rate per user versus the number of UxNBs ($M$) for the aerial and terrestrial BSs with cell-free and cellular schemes. 
 We can see that the proposed aerial cell-free scheme outperforms the aerial cellular scheme for both values of $S=100$ and $S=400$. 
Further, we can see that by increasing $M$, the improvement in the performance of aerial schemes is much more than the terrestrial scheme, and this is because of a higher probability of establishing LoS links between UxNBs and users for a larger $M$ in aerial schemes. Finally, it is also shown that the superiority of the aerial cell-free scheme over the aerial cellular scheme increases by $M$ which is due to a higher intercell interference for the cellular scheme at a higher $M$.

\begin{figure}
\centering
  \includegraphics[width=\linewidth,height=5cm]{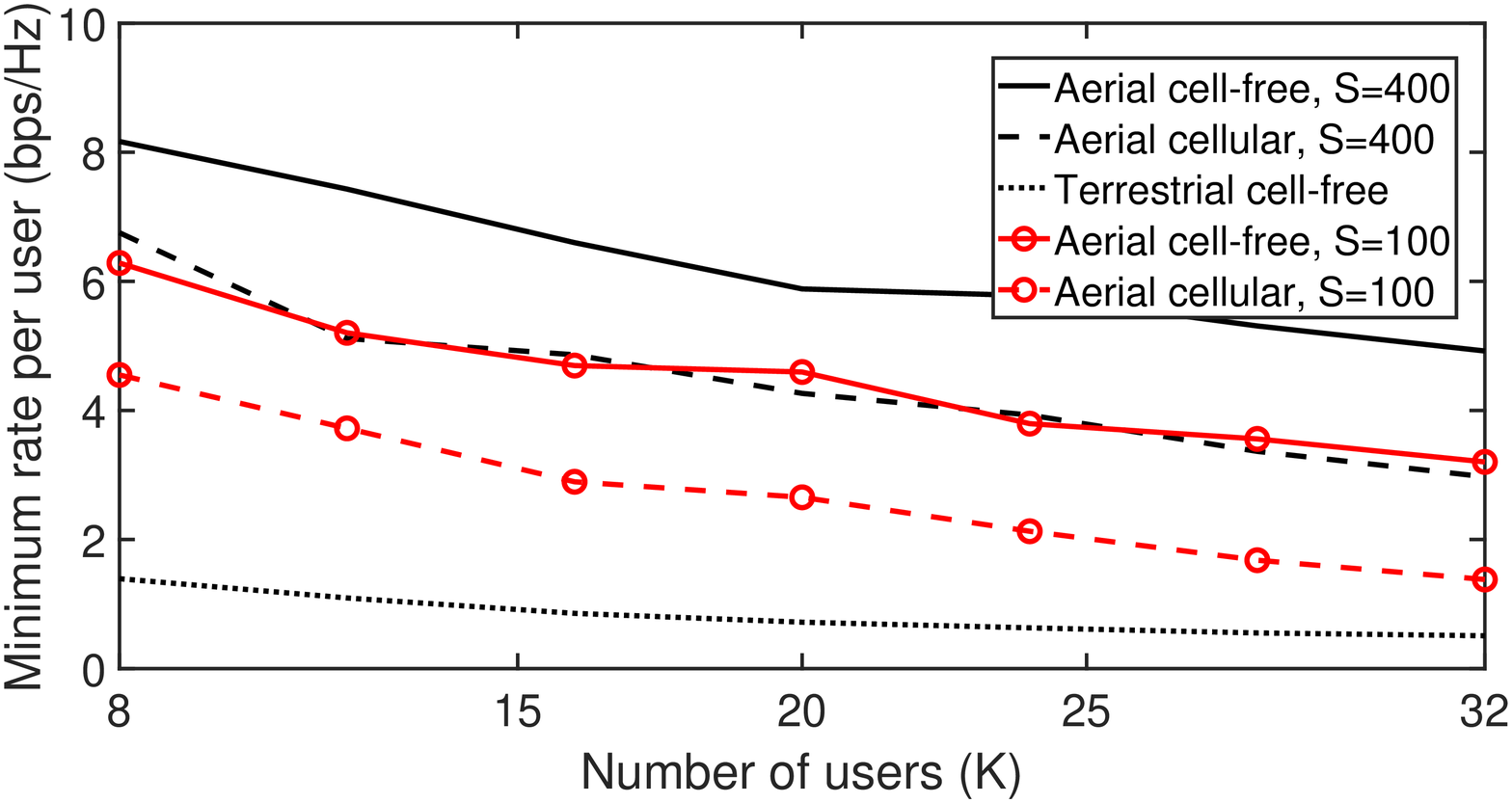}
  \captionof{figure}{The achievable minimum rate per user versus the number of users ($K$) for the aerial and terrestrial BSs with cell-free and cellular schemes. We set $P_m=25~\mathrm{dBm},~\forall m$, $M=16$, $N=4$, and $G=9$.}
  \label{R_VS_k_M16_P25_S400}
\end{figure}

Fig. \ref{R_VS_k_M16_P25_S400} indicates the achievable minimum rate per user versus the number of users ($K$) for the aerial and terrestrial BSs with cell-free and cellular schemes.
We set the parameter values as $P_m=25~\mathrm{dBm},~\forall m$, $M=16$, $N=4$, and $G=9$. 
As we can see, by increasing $K$, the performance of all schemes decreases. Also, we can see that the proposed aerial cell-free scheme performs better than both the aerial cellular and terrestrial cell-free baseline schemes due to the LoS link between the users and UxNBs. Further, by increasing $S$, the performance of the aerial schemes improves, which means that by increasing $S$ we can serve more users for a given minimum rate per user.

\begin{figure}
\centering
  \includegraphics[width=\linewidth,height=5cm]{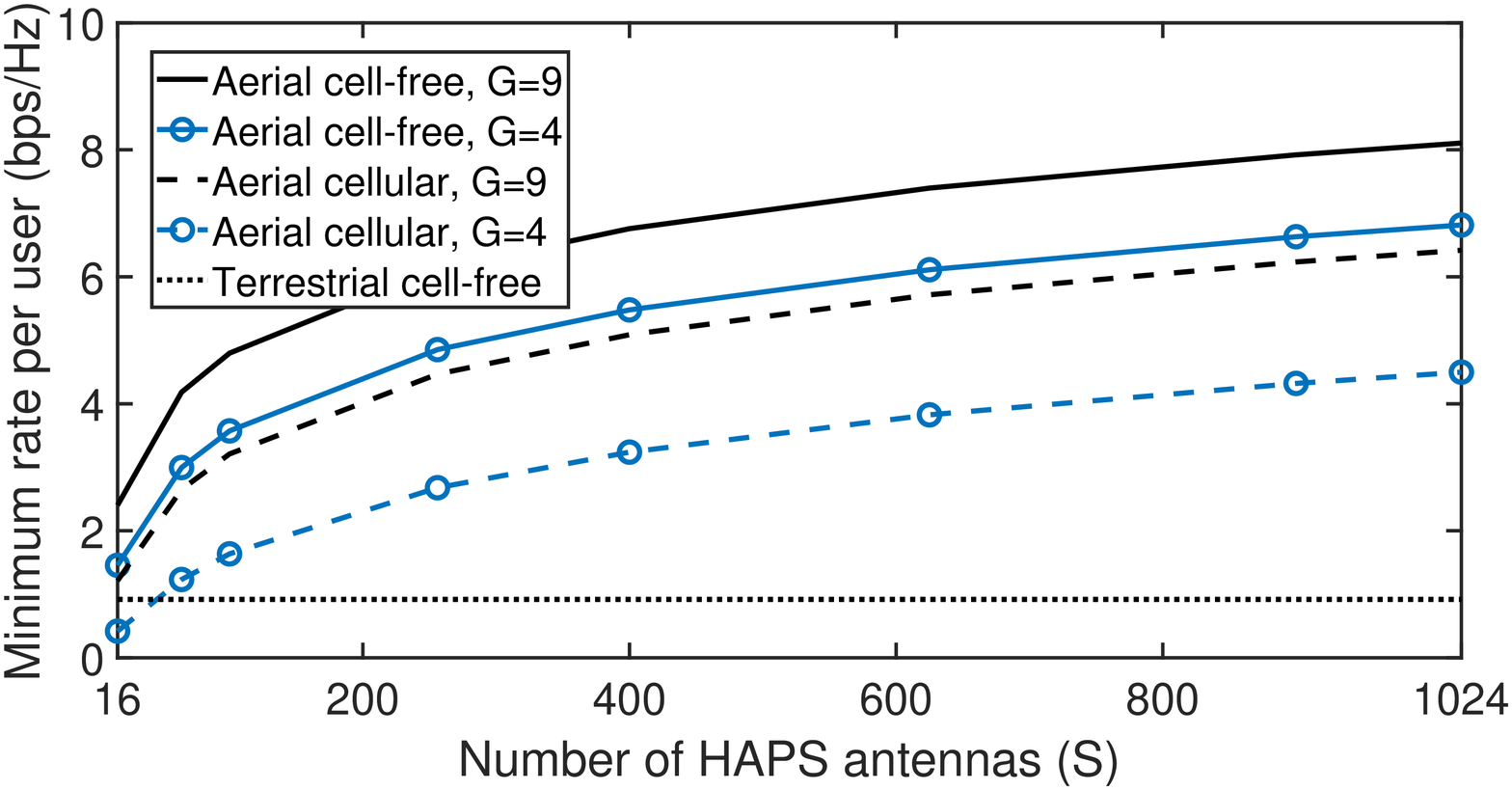}
  \captionof{figure}{The achievable minimum rate per user versus the number of HAPS antenna elements ($S$) for the cell-free and cellular schemes. We set $P_m=25~\mathrm{dBm},~\forall m$, $M=16$,  $K=16$, $N=4$, and $G=4,9$.}
  \label{R_vs_S_P25_M16_K16}
\end{figure}

Fig. \ref{R_vs_S_P25_M16_K16} indicates the achievable minimum rate per user versus the number of HAPS antenna elements ($S$). One can see that when the number of HAPS antenna elements is low, the performance of the aerial and terrestrial schemes are close. For example, when $S=16$, the terrestrial cell-free scheme performs better than the aerial cellular scheme, and its performance is comparable to the proposed aerial cell-free scheme. However, when the number of HAPS antenna elements is high, both aerial schemes have significant performance gain over the terrestrial cell-free scheme. This figure shows that utilizing a HAPS as a CPU is useful when the enormous path loss between the UxNBs and the HAPS in the sub-THz band is compensated for by a high number of antenna elements at the HAPS.

\begin{figure}[t]
\centering
  \includegraphics[width=\linewidth,height=5cm]{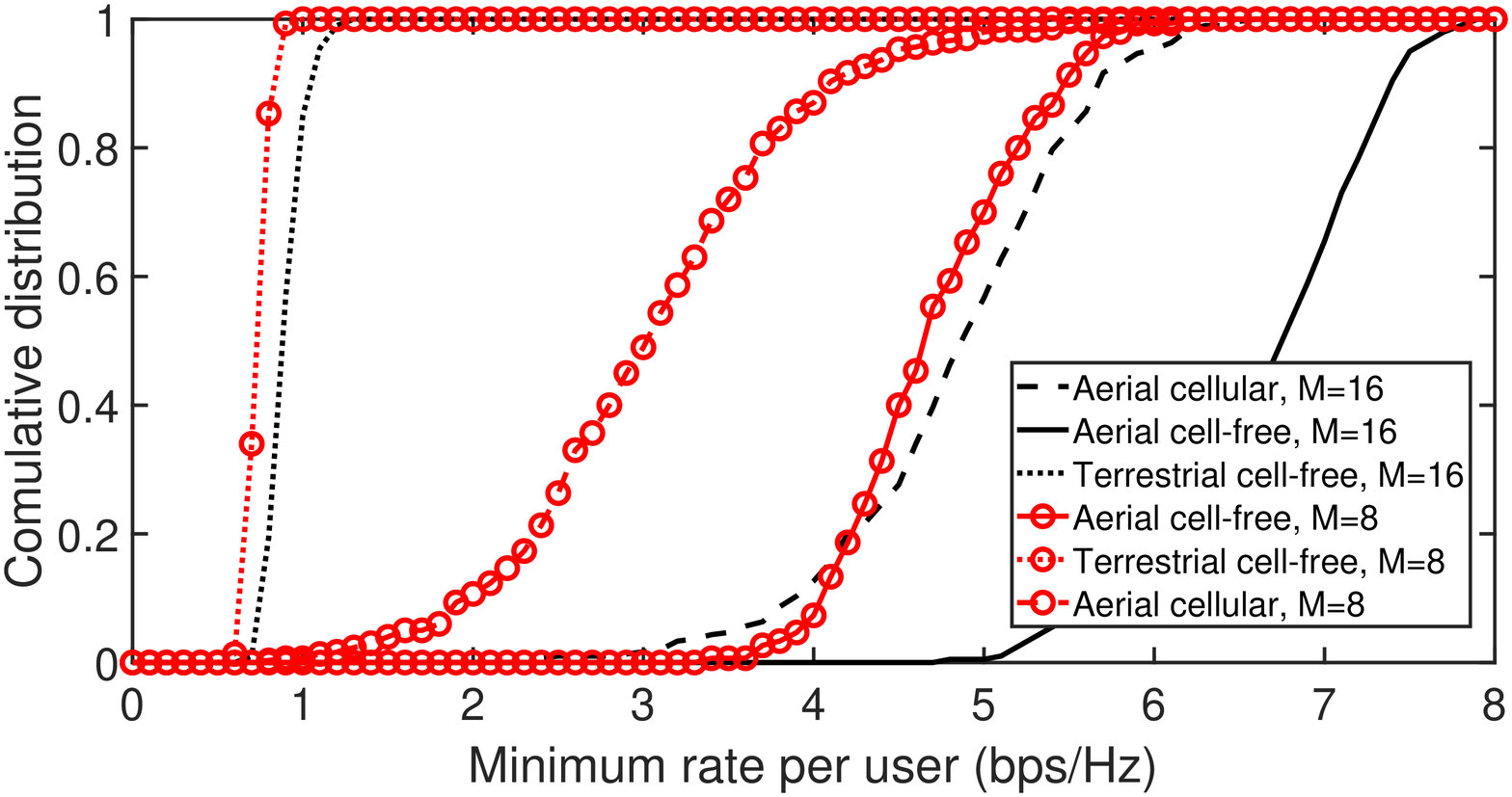}
  \captionof{figure}{The CDF of the minimum rate per user ($R_1=...=R_K$) for the aerial and terrestrial BSs with cell-free and cellular schemes. We set $P_m=25~\mathrm{dBm},~\forall m$, $K=16$, $S=400$, $N=4$, and $G=9$.}
  \label{CDF_K16_M16_S400}
\end{figure}

Fig. \ref{CDF_K16_M16_S400} shows the CDF of the minimum rate per user ($R_1=...=R_K$) for the aerial and terrestrial BSs with cell-free and cellular schemes.
We can see that the proposed aerial cell-free scheme has a better performance compared with both the aerial cellular and terrestrial cell-free baseline schemes for both values of $M=8$ and $M=16$. We can also see that the aerial cellular scheme outperforms the terrestrial cell-free scheme, and this is due to the LoS links that are established between users and UxNBs in the aerial networks. Further, we can see that the variance of the minimum achievable rate per user for the cell-free scheme is less than the cellular one, and we can observe that increasing the number of UxNBs reduces the variance of the minimum achievable rate per user for both aerial cell-free and aerial cellular schemes.

\begin{figure}
    \centering
  \includegraphics[width=\linewidth,height=5cm]{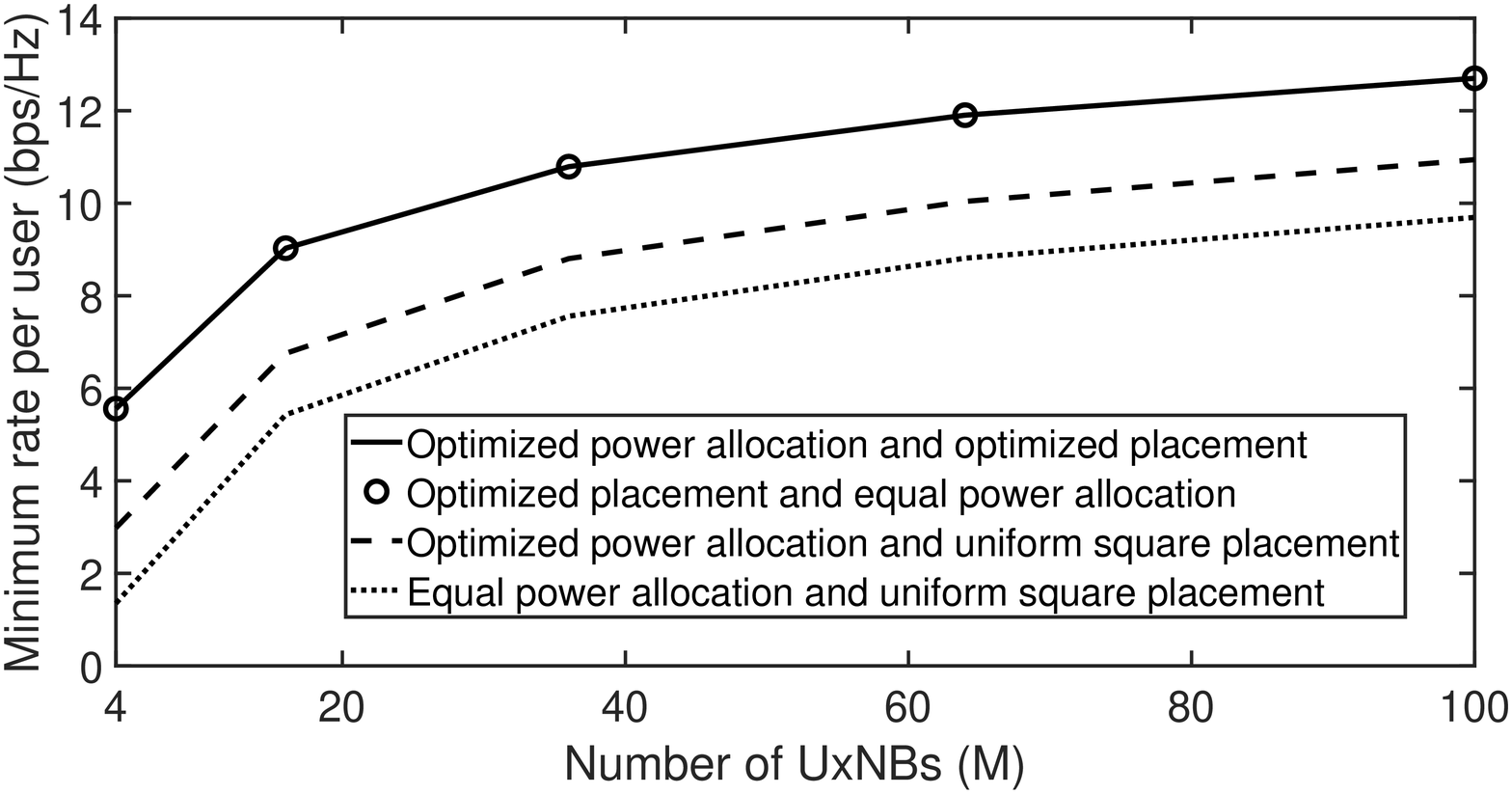}
  \captionof{figure}{The achievable minimum rate per user in the proposed cell-free scheme versus the number of UxNBs ($M$) for the cases where the locations of UxNBs and/or allocated power for each user at each UxNB are optimized. We assumed that $K=16$, $P_m=25~\mathrm{dBm},~\forall m$, $S=400$, $N=4$, and $G=9$.}
  \label{R_vs_M_PXY_K_16_P25_S400}
\end{figure}

Fig. \ref{R_vs_M_PXY_K_16_P25_S400} indicates the achievable minimum rate per user in the proposed cell-free scheme versus the number of UxNBs ($M$).
 In this figure, we explore heuristic baseline solutions for the allocated power of each user at each UxNB and the locations of UxNBs. In the heuristic solution for power allocation, we assume that the power of each UxNB is equally divided among users. For the heuristic solution regarding the locations of UxNBs, we uniformly deploy them over a square area. Notably, our proposed scheme is compared with three baseline solutions in this figure. In the first baseline scheme, we utilize heuristic solutions for both location and power variables. In the second baseline solution, we apply the heuristic solution for power and optimized values for locations. In the third baseline scheme, we utilize the heuristic solution for locations and optimized values for powers. We can see that when we use heuristic solutions for both power allocation and locations, the worst performance is achieved. Also, we can see that the optimization of power improves the performance of the proposed scheme. This figure indicates that the performance of the case where both power and location are jointly optimized with Algorithm 1 is the same as the performance of the case where the location is optimized with the SCA method and the heuristic solution is utilized for power allocation. This means that when the locations of UxNBs are optimized, equal power allocation is the optimal solution for power. This is because by optimizing the locations of the UxNBs, all users reach equal rates, and hence the optimal policy for power allocation is to divide the total power equally among the users. Please note that at the optimal point of our max-min optimization problem, all users must have the same rates.


 \begin{figure}
     \centering
  \includegraphics[width=\linewidth,height=5cm]{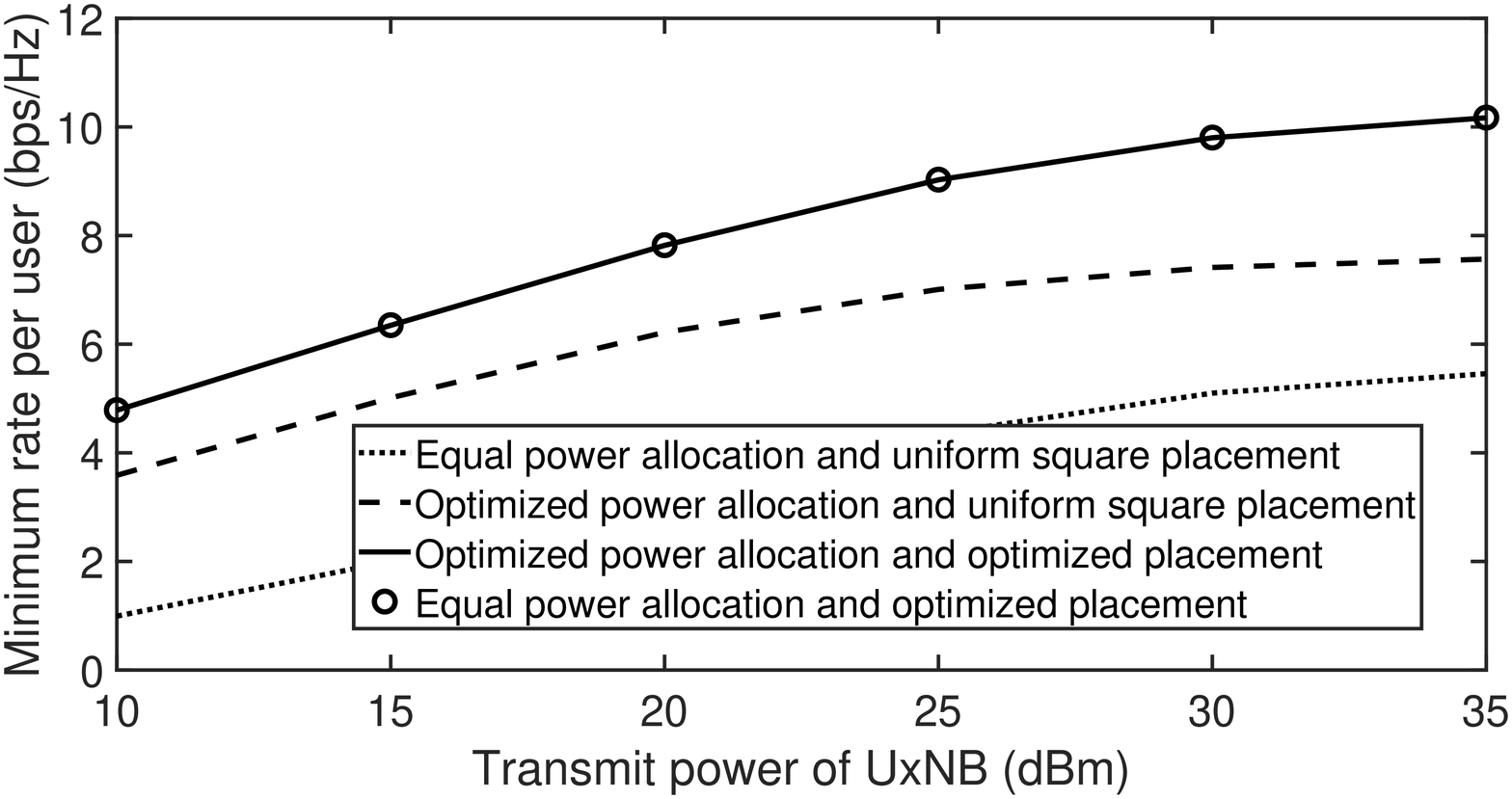}
  \captionof{figure}{The achievable minimum rate per user in the proposed cell-free scheme versus total power of each UxNB ($P_1=...=P_M$), for the cases where the locations of UxNBs and/or allocated power for each user at each UxNB are optimized. We assumed that $K=16$, $M=16$, $S=400$, $N=4$, and $G=9$.}
  \label{R_vs_P_PXY_M16_K16_S400}
\end{figure}

Fig. \ref{R_vs_P_PXY_M16_K16_S400} shows the achievable minimum rate per user in the proposed cell-free scheme versus the total power of each UxNB ($P_1=...=P_M$) for the cases where the locations of UxNBs and/or allocated power for each user at each UxNB are optimized. We set the parameter values as $K=16$, $M=16$, $S=400$, $N=4$, and $G=9$. We consider the same heuristic solutions with Fig. \ref{R_vs_M_PXY_K_16_P25_S400} for power allocation and locations of UxNBs. We can see that optimizing power is more useful for lower transmit powers because there are enough power resources at higher transmit powers for managing fairness among users. Also, as in Fig. \ref{R_vs_M_PXY_K_16_P25_S400}, we can see that when the locations of UxNBs are optimized, equal power allocation is the optimal solution for power.

\begin{figure}
\centering
  \includegraphics[width=\linewidth,height=5cm]{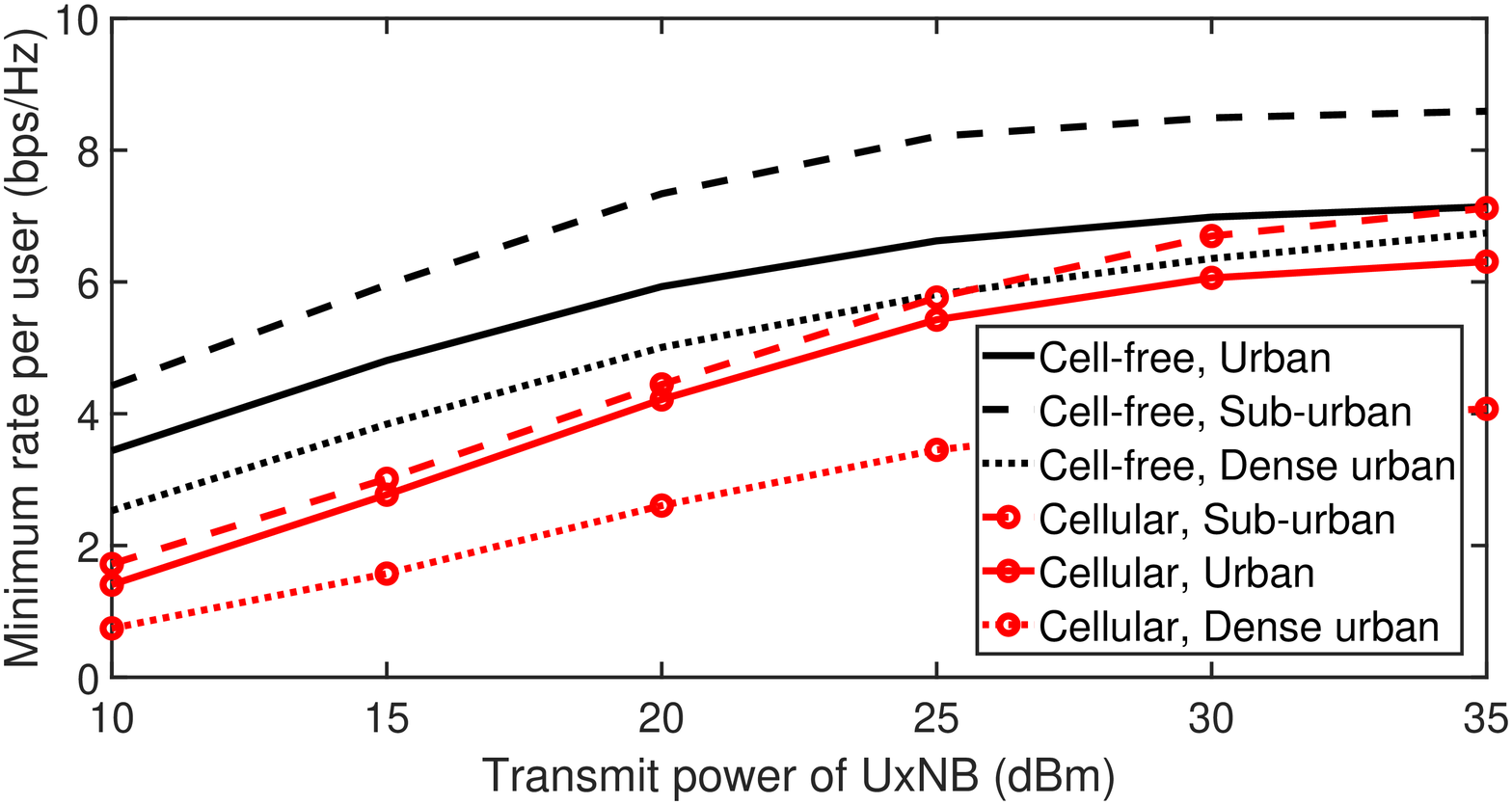}
  \captionof{figure}{The achievable minimum rate per user versus total power of each UxNB ($P_1=...=P_M$) in aerial schemes for three urban, suburban, and dense urban environments. We set $K=16$, $M=16$, $N=4$, $S=400$, and $G=9$.}
  \label{R_vs_P_env_K16_M16_S400}
\end{figure}

Fig. \ref{R_vs_P_env_K16_M16_S400} shows the achievable minimum rate per user versus the total power of each UxNB ($P_1=...=P_M$) in aerial schemes for three urban, suburban, and dense urban environments. According to \cite{Irem}, the excessive path losses affecting the air-to-ground links in LoS and NLoS cases, i.e., $(\eta_{\mathsf{LoS}}^{\mathsf{dB}},\eta_{\mathsf{NLoS}}^{\mathsf{dB}})$, are equal to $(0.1,21)$, $(1,20)$, and $(1.6,23)$ for suburban, urban, and dense urban environments, respectively. Also, parameters  $(A,B)$ are equal to $(4.88,0.43)$, $(9.61,0.16)$, and $(12.8,0.11)$ for suburban, urban, and dense urban environments, respectively \cite{Irem}. 
As we can see, in both cellular and cell-free schemes, the suburban environment outperforms the urban and dense urban environments, and the dense urban environment performs the worst. This is due to the higher probability of establishing a LoS link in a suburban environment. Also, one can see that the proposed cell-free scheme performs better than the baseline cellular scheme for all environments, and this is due to the utilization of all the received signals from all users at each UxNB in the cell-free mode.

\begin{figure}
    \centering
  \includegraphics[width=\linewidth,height=5cm]{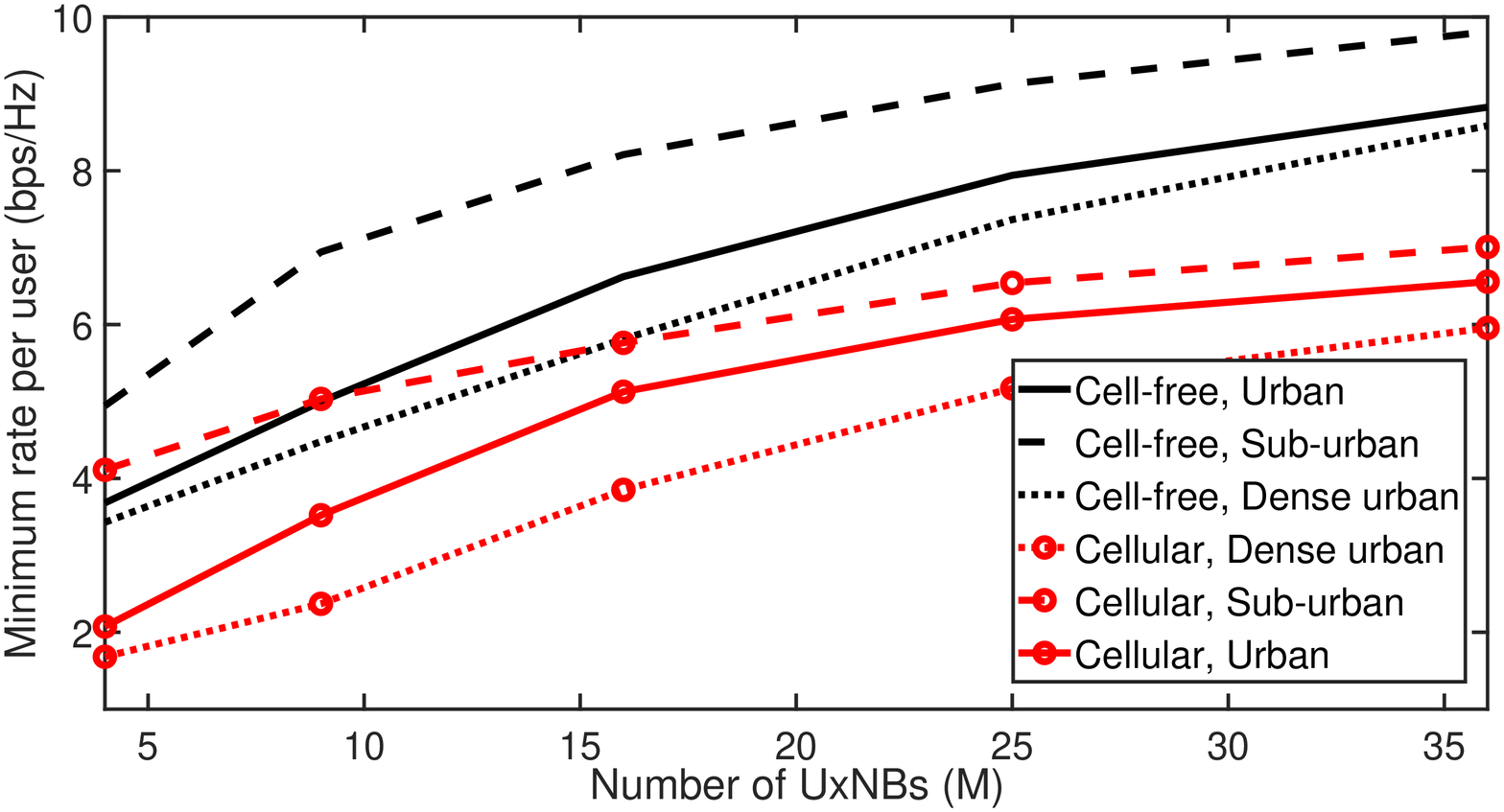}
  \captionof{figure}{The achievable minimum rate per user versus the number of UxNBs ($M$) in aerial schemes for three urban, suburban, and dense urban environments. We set $P_m=25~\mathrm{dBm},~\forall m$, $K=16$, $N=4$, $S=400$, and $G=9$.}
  \label{R_vs_M_env_K16_S400_P25}
\end{figure}

Fig. \ref{R_vs_M_env_K16_S400_P25} shows the achievable minimum rate per user versus the number of UxNBs ($M$) in aerial schemes for three urban, suburban, and dense urban environments.
As we can see, the proposed aerial cell-free scheme outperforms the aerial cellular scheme for all environments. 
We can also see that by increasing $M$, the performance of both cell-free and cellular schemes improves for all environments, and this is due to the establishment of more LoS links between the UxNBs and users for a larger $M$ in aerial schemes. Further, it is shown that the superiority of the proposed cell-free scheme over the aerial cellular scheme for all environments increases by $M$, and this is due to a higher intercell interference for the cellular scheme at a higher value for $M$.


Simulation results showed that the aerial cell-free scheme performs much better than the terrestrial cell-free scheme when the HAPS is equipped with a very large antenna array. 
This performance increase comes at the cost of deploying HAPS and dedicated UAVs. In order to reduce the UAV deployment costs, we can utilize other non-dedicated UAVs as UxNBs by equipping them with the proposed transceiver scheme. In regard to deploying a HAPS, it has to be said that HAPS can be deployed in the stratosphere for many other use cases, such as super macro BS, computing, sensing, and localization, and in this paper, it is utilized as a CPU as well.

\section{Conclusion}
 In this paper, we proposed a cell-free scheme for a set of UxNBs to manage the severe interference in aerial cellular networks between terrestrial users and UxNBs of neighboring cells. We also proposed to use a HAPS as a CPU to combine all the received signals from all UxNBs in the sub-THz band. This involved proposing a transceiver scheme at the UxNBs, a receiver scheme at the HAPS, and formulating an optimization problem to maximize the minimum SINR of users. Simulation results proved the superiority of the proposed scheme compared to aerial cellular and terrestrial cell-free baseline schemes in urban, suburban, and dense urban environments, which is due to the existence of LoS links between users and UxNBs. Simulation results also showed that utilizing a HAPS as a CPU is useful when the considerable path loss in the sub-THz band between UxNBs and the HAPS is compensated for by a high number of antenna elements at the HAPS.

\appendices

\section{PROOF OF Proposition \ref{propos_sinr}}\label{proof_sinr}
In order to derive the $\mathsf{SINR}$ of each user, we rewrite the filtered and combined signal at RB $k$ in the HAPS, i.e., $y^k$ in (\ref{yk}),  as
\footnotesize
\begin{equation}
\begin{split}
   y^k&=\sum_{m=1}^M\sum_{s=1}^S c_{ms}^* y_{s}^k=\sum_{m=1}^M\sum_{s=1}^S c_{ms}^*(G\sum_{m'=1}^M c_{m's}\gamma_{m'}\sqrt{P_{km'}}\\&\times\frac{\sum_{n=1}^N(\sum_{k'=1}^{K}h_{k'm'n}\sqrt{P_{k'}}s_{k'}+z_{m'})\times \frac{h_{km'n}^*}{|h_{km'n}|}}{|\sum_{n=1}^N(\sum_{k'=1}^{K}h_{k'm'n}\sqrt{P_{k'}}s_{k'}+z_{m'})\times \frac{h_{km'n}^*}{|h_{km'n}|}|}+Z_H).
   \end{split}\label{yk_appen}
\end{equation}
\normalsize
Then, with the use-and-then-forget bound \cite{marzetta2016fundamentals}, we derive the achievable SINR. 
From the last equation of (\ref{yk_appen}), we write the desired signal (DS) for user $k$ as
\footnotesize
\begin{equation}\label{ds}
\begin{split}
    \mathsf{DS}_k&=\sum_{m=1}^M\sum_{s=1}^S c_{ms}^*G\sum_{m'=1}^M c_{m's}\gamma_{m'}\sqrt{P_{km'}}\\&\times\frac{\sum_{n=1}^N |h_{km'n}|\sqrt{P_{k}}s_{k}}{|\sum_{n=1}^N(\sum_{k'=1}^{K}h_{k'm'n}\sqrt{P_{k'}}s_{k'}+z_{m'})\times \frac{h_{km'n}^*}{|h_{km'n}|}|}.
\end{split}
\end{equation}
\normalsize
The standard deviation of the power normalization factor, i.e., $P_{\mathrm{NF}}=|\sum_{n=1}^N(\sum_{k'=1}^{K}h_{k'm'n}\sqrt{P_{k'}}s_{k'}+z_{m'})\times \frac{h_{km'n}^*}{|h_{km'n}|}|$, in the denominator of the desired signal in (\ref{ds}) is given by
\footnotesize
\begin{equation} \label{F_norm}
    \begin{split}
&F_{\mathsf{NORM}}=\sqrt{E[P_{\mathrm{NF}} \times P_{\mathrm{NF}}^*]}
       =\frac{\sqrt{N}}{M}\sqrt{\sum_{m=1}^M\sum_{k'=1}^{K}\beta_{k'm}^2P_{k'}+M\sigma^2}.
    \end{split}
\end{equation}
\normalsize
The expectation of the desired signal for user $k$ can then be written as
\footnotesize
\begin{equation}
    \begin{split}
        E[\mathsf{DS}_k]&=E[\frac{1}{F_{\mathsf{NORM}}}\sum_{m=1}^M\sum_{s=1}^S c_{ms}^*G\sum_{m'=1}^M c_{m's}\gamma_{m'}\sqrt{P_{km'}}\sum_{n=1}^N |h_{km'n}|\sqrt{P_{k}}]\\&=\frac{GNS\sqrt{P_{k}}}{F_{\mathsf{NORM}}}\sum_{m=1}^M \gamma_{m}\sqrt{P_{km}} \beta_{km}.
    \end{split}
\end{equation}
\normalsize
Next, we derive the variance of the interference terms in (\ref{yk_appen}). We can see there that we have two types of interference. The first type is caused by interference from users, and we show it with $I_{k,U}$. The second type is due to the amplified noise at UxNBs, and we name it $I_{k,N}$. For $I_{k,U}$, we can write
\footnotesize
\begin{equation}
    \begin{split}\label{var_IU}
        E[I_{k,U}&\times I_{k,U}^*]=E[(\frac{1}{F_{\mathsf{NORM}}}\sum_{m=1}^M\sum_{s=1}^S c_{ms}^*G\sum_{m'=1}^M c_{m's}\gamma_{m'}\sqrt{P_{km'}}\\&\times\sum_{n=1}^N\sum_{k'=1}^{K}h_{k'm'n}\sqrt{P_{k'}}s_{k'}\times \frac{h_{km'n}^*}{|h_{km'n}|})\times(\frac{1}{F_{\mathsf{NORM}}}\sum_{m=1}^M\sum_{s=1}^S c_{ms}^*G\\&\times \sum_{m'=1}^M c_{m's}\gamma_{m'}\sqrt{P_{km'}}\sum_{n=1}^N\sum_{k'=1}^{K}h_{k'm'n}\sqrt{P_{k'}}s_{k'}\times \frac{h_{km'n}^*}{|h_{km'n}|})^*]\\&=\frac{NSG^2}{F_{\mathsf{NORM}}^2}\sum_{m=1}^M\gamma_{m}^2P_{km}\sum_{k'=1}^{K}\beta_{k'm}^2P_{k'}.
    \end{split}
\end{equation}
\normalsize
Also, for the variance of the $I_{k,N}$, we can write
\footnotesize
\begin{equation}
    \begin{split}\label{var_IN}
       & E[I_{k,N}I_{k,N}^*]=\frac{1}{F_{\mathsf{NORM}}^2}E[(\sum_{m=1}^M\sum_{s=1}^S c_{ms}^*G\sum_{m'=1}^M c_{m's}\gamma_{m'}\sqrt{P_{km'}}\sum_{n=1}^Nz_{m'}\\&\times \frac{h_{km'n}^*}{|h_{km'n}|})
        \times(\sum_{m=1}^M\sum_{s=1}^S c_{ms}^*G\sum_{m'=1}^M c_{m's}\gamma_{m'}\sqrt{P_{km'}}\sum_{n=1}^Nz_{m'}\times \frac{h_{km'n}^*}{|h_{km'n}|})^*]\\&=\frac{NSG^2}{F_{\mathsf{NORM}}^2}\sum_{m=1}^M \gamma_{m}^2P_{km}\sigma^2.
    \end{split}
\end{equation}
\normalsize
Finally, we show the noise at the HAPS with $N_{\mathsf{HAPS}}$, whose variance is given by
\footnotesize
\begin{equation}
    \begin{split}
        E[N_{\mathsf{HAPS}}N_{\mathsf{HAPS}}^*]&=E[\sum_{m=1}^M\sum_{s=1}^S c_{ms}^*Z_H\times(\sum_{m=1}^M\sum_{s=1}^S c_{ms}^*Z_H)^*]=MS\sigma_{H}^2.
    \end{split}
\end{equation}
\normalsize
As we mentioned in Section II, in the sub-THz band, the absorbed parts of the signals by the medium are re-emitted with a random phase shift \cite{Petrov_SINR_THz,Saad_thz}. Due to its random phase, this re-emission interference signal, which is indicated by $I_{k,R}$ for user $k$, is uncorrelated with $y^k$ in (\ref{yk_appen}), and its mean value is equal to $0$. In order to get the variance of the re-emission interference for user $k$, we just need to replace the terms $\gamma_m^2= \tau_m\rho_m^2 $ with $(1-\tau_m)\rho_m^2 $ in the variance expressions of the interference formulas in (\ref{yk_appen}), i.e., $E[I_{k,U}\times I_{k,U}^*]$ in (\ref{var_IU}) and  $E[I_{k,N}\times I_{k,N}^*]$ in (\ref{var_IN}), and combine them as follows:
\footnotesize
\begin{equation}
\begin{split}
   E[I_{k,R}I_{k,R}^*]&=\frac{NSG^2}{F_{\mathsf{NORM}}^2}\sum_{m=1}^M(1-\tau_m)\rho_m^2P_{km}\sum_{k'=1}^{K}\beta_{k'm}^2P_{k'}\\&+\frac{NSG^2}{F_{\mathsf{NORM}}^2}\sum_{m=1}^M (1-\tau_m)\rho_m^2P_{km}\sigma^2. 
   \end{split}
\end{equation}
\normalsize

Now, according to the derived formulas for $E[\mathsf{DS}_k]$, $E[I_{k,U}\times I_{k,U}^*]$, $E[I_{k,N}\times I_{k,N}^*]$, $ E[N_{\mathsf{HAPS}}N_{\mathsf{HAPS}}^*]$, and $E[I_{k,R}I_{k,R}^*]$, we derive the $\mathsf{SINR}$ of the user $k$ for the proposed scheme as follows:
\scriptsize
\begin{equation}\label{sinr_k_f}
\begin{split}
    &\mathsf{SINR}_k=\frac{E[\mathsf{DS}_k]^2}{E[I_{k,U} I_{k,U}^*]+E[I_{k,N} I_{k,N}^*]+E[I_{k,R}I_{k,R}^*]+E[N_{\mathsf{HAPS}}N_{\mathsf{HAPS}}^*]}=\\&\frac{G^2N^2SP_{k}(\sum_{m=1}^M \gamma_{m}\sqrt{P_{km}} \beta_{km})^2}{NG^2\sum_{m=1}^M\rho_{m}^2P_{km}\sum_{k'=1}^{K}\beta_{k'm}^2P_{k'}+NG^2\sum_{m=1}^M \rho_{m}^2P_{km}\sigma^2+M\sigma_{H}^2F_{\mathsf{NORM}}^2}.
\end{split}
\end{equation}
\normalsize
By substituting $F_{\mathsf{NORM}}$ from (\ref{F_norm}) in SINR expressions in (\ref{sinr_k_f}), we can derive the achievable rate of user $k$ as $R_k=\log_2(1+\mathsf{SINR}_k)$,
and the proof is completed.

\begin{figure*}
\footnotesize
\begin{equation}\label{uls}
  \begin{split}
      \mathsf{ULS}&(f,t)=\{T:f(T)>t\}=\bigg\{T:\frac{MP^2NSP_{k}(\sum_{m=1}^M \gamma_{m}T_{km} \beta_{km})^2}{MP^2\sum_{m=1}^M\rho_{m}^2T_{km}^2\sum_{k'=1}^{K}\beta_{k'm}^2P_{k'}+MP^2\sum_{m=1}^M \rho_{m}^2T_{km}^2\sigma^2+\sigma_{H}^2(\sum_{m=1}^M\sum_{k'=1}^{K}\beta_{k'm}^2P_{k'}+M\sigma^2)}>t,~\forall k\bigg\}\\&=\bigg\{T:\sqrt{MP^2\sum_{m=1}^M\rho_{m}^2T_{km}^2\sum_{k'=1}^{K}\beta_{k'm}^2P_{k'}+MP^2\sum_{m=1}^M \rho_{m}^2T_{km}^2\sigma^2+\sigma_{H}^2(\sum_{m=1}^M\sum_{k'=1}^{K}\beta_{k'm}^2P_{k'}+M\sigma^2)}<\frac{\sqrt{MP^2NSP_{k}}\sum_{m=1}^M \gamma_{m}T_{km} \beta_{km}}{\sqrt{t}},~\forall k\bigg\}.
  \end{split}  
\end{equation}
\normalsize
\end{figure*}

\section{PROOF OF Proposition \ref{propos_quasi}}\label{proof_quasi}
In order to prove the quasi-concavity of (P1), we need to show that the objective function is quasi-concave, and that the constraints' set is convex. To prove the quasi-concavity of the objective function, we just need to prove that its upper-level set is a convex set \cite{boyd2020disciplined}. For this, we first perform the variable change $\bold{T}=[P_{km}^2]_{K\times M}$ in (P1) and show the objective function with $f(T)$. Hence, for any $t\in \mathbb{R}_+$, the upper-level set (ULS) of the objective function is given by (\ref{uls}) which is in the form of a norm function less than an affine function of variable $T$, and hence it is a convex set. With the new variable $T$, the constraint (\ref{eq:constraint-sum_P}) will be $\sum_{k=1}^KT_{km}^2\leq P_m, ~\forall m$, which is a convex set, and the proof is completed.




\ifCLASSOPTIONcaptionsoff
  \newpage
\fi



\bibliographystyle{IEEEtran}
\bibliography{IEEEabrv,myref}
%

\end{document}